\newcommand{\mClass}[0]{{\cal X}}
\newcommand{\dClass}[0]{{\cal D}}
\newcommand{\fPlace}[0]{{f_{P}}}
\newcommand{\fRecon}[0]{{f_{R}}}
\newcommand{\deleted}[1]{}
\title{\Large 
Geometric reconstruction from point-normal data
}
\author{Eleanor G. Rieffel \\
FXPAL \\
rieffel@fxpal.com
\and 
Don Kimber \\
FXPAL \\
kimber@fxpal.com
\and
Jim Vaughan \\
FXPAL \\
jimv@fxpal.com
}
\date{}
\begin{document}
\maketitle


\begin{abstract} \small\baselineskip=9pt 
Creating virtual models of real spaces and objects
is cumbersome and time consuming. 
This paper focuses on the problem of geometric reconstruction 
from sparse data obtained from certain image-based modeling approaches.
A number of elegant and simple-to-state problems arise concerning 
when the geometry can be reconstructed. 
We describe results and counterexamples, and list open problems.
\end{abstract}

\section{Introduction.}

While three-dimensional virtual models have long been used in
industry for design, the increased speed and graphics
capabilities of today's computers, higher bandwidth, and the
popularity of virtual environments mean that virtual models
are becoming ever easier to view, manipulate, and distribuite. 
This improved ease of use has spawned an increasing desire for
better methods to create models, including models of real objects and spaces.

At FXPAL, we are particularly interested in the use of virtual models
in a factory setting \cite{Back09} and in surveillance 
\cite{Girgensohn07, Rieffel07}. 
Common applications include training, immersive telepresence, 
military exercises, and design and testing of emergency response plans. 
Other applications range from virtual tourism \cite{Snavely06, Shao07} 
and psychiatric treatment for post-traumatic stress disorder \cite{Difede02}, 
phobias \cite{Moore02}, and autism \cite{Strickland97}.
Real estate offices are beginning to use
three-dimensional models to support the creation of virtual 
tours \cite{Heartwood}.
Not only are marketing departments beginning to make models of their
products available to potential purchasers, but applications are
springing up around these models. For example, MyDeco \cite{MyDeco} 
enables users to create models of a three-dimensional space, 
place models of real furniture and other home accessories 
that are available for purchase in the virtual space, and then buy  
any of these products directly from the site. Virtual worlds such as 
Second Life \cite{SecondLife} are filled with more or less realistic
models of real places and objects.
Google Earth \cite{GoogleEarth} now includes three-dimensional
models of various buildings. 

Unfortunately, creating virtual models of real objects and spaces
remains cumbersome and time consuming. 
Current state of the art modeling is done by artists using
interactive modeling tools, often supported by measurement
and photographs of the real space.
An ambitious long term research goal is to
automatically construct such models from collected images;
fully automatic approaches are not yet possible.
FXPAL's Pantheia system \cite{Kimber09, Rieffel09}
enables users to create models by {\it marking up} 
the real world with pre-printed uniquely identifiable markers. 
Predefined meanings associated with the markers guide the system 
in creating models. The position and outward pointing normal at
each marker can be estimated from user-captured images or video of 
the marked-up space. Point-normal data, consisting of the position and
outward pointing normal, can be obtained using other technologies
such as range scanners.

This paper focuses on the problem of reconstructing the geometry 
from the marker information. Our initial attempts at reconstruction
used {\it ad hoc} reconstruction algorithms and markup placement strategies. 
When we tried to model a new space, we often needed to place 
additional markers, add meanings to the markup language, or revise 
the reconstruction algorithm to make it more powerful. This paper is 
the result of our work to place the geometric reconstruction aspect 
of our system on a firmer formal footing.  
\begin{figure}
\centering
\includegraphics[width=1.1in,height=1.8in,viewport=130 55 430 600,clip]{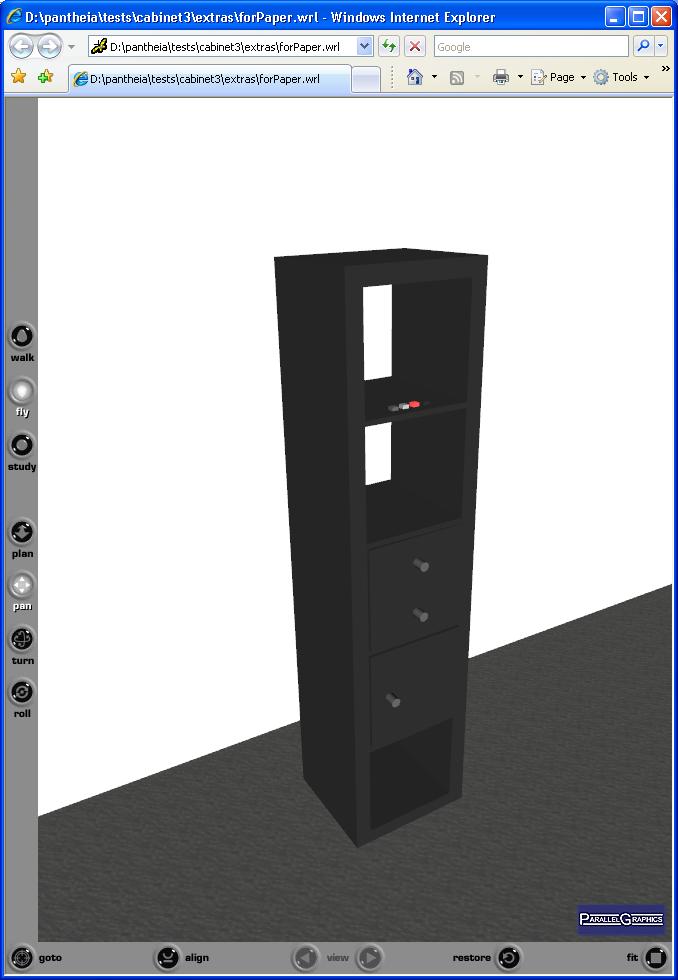}
\includegraphics[width=1.1in,height=1.8in]{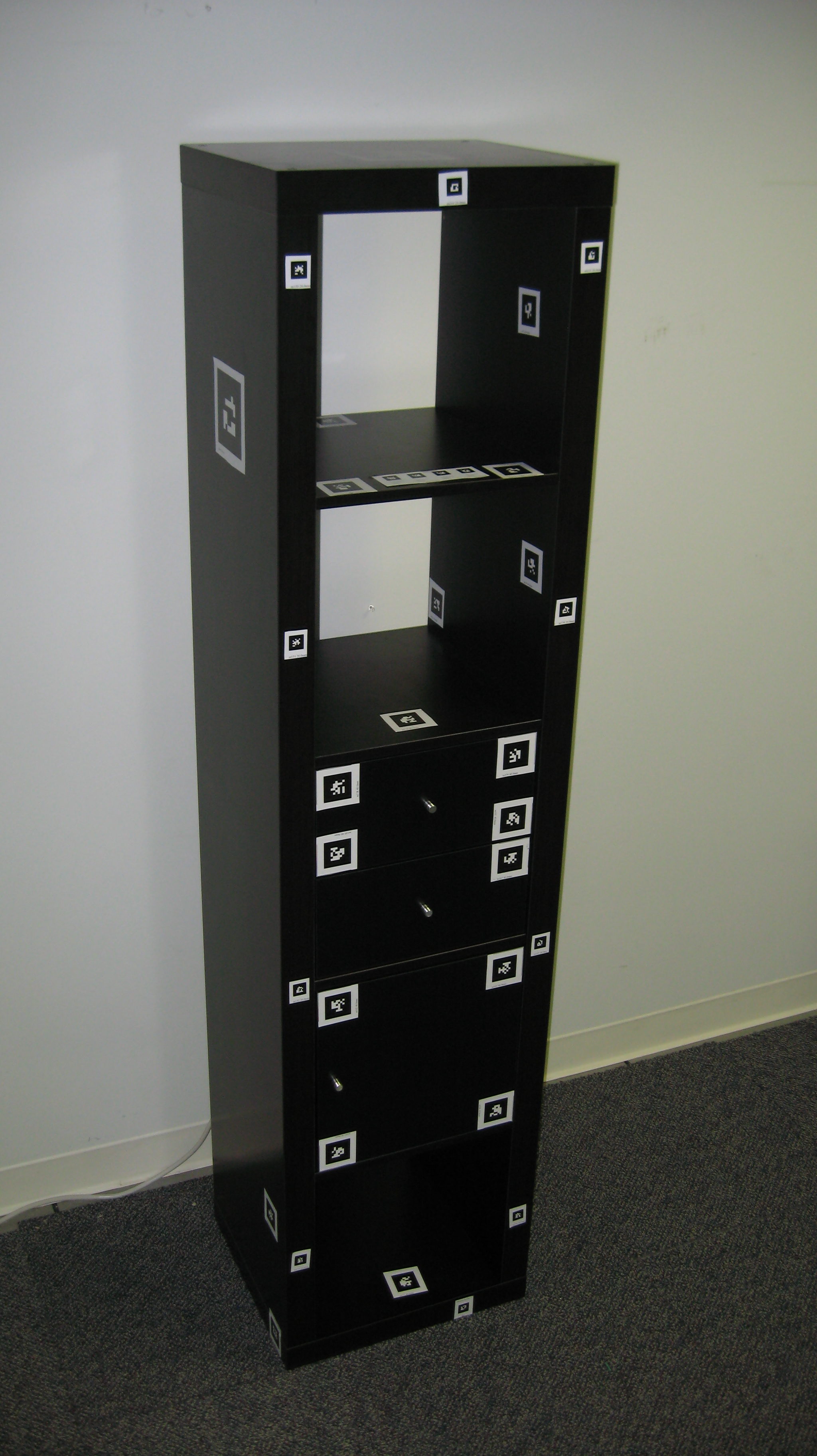}
\caption{\label{figure::cabinet} Model of an IKEA Bookcase cabinet generated
by the Pantheia system. }
\end{figure}


\section{Related Work.}
\label{sec:relWork}

This section discusses two types of related work. First, we 
discuss related work in the area of image-based modeling.
Then we survey previous work in
polyhedral reconstruction from simple geometric data.

Researchers such as
\cite{PollefeysTutorial, Pollefeys02, Dellaert00}
work on non-marker-based methods for constructing models from images.
Their work advances progress on the hard problem of deducing
geometric structure from image features.
Instead, we make the problem simpler by placing markers
that are easily detected and have meanings that greatly simplify
the geometric deduction. Furthermore, a marker-based approach
enables users to specify which parts of the scene are important.
In this way, Pantheia
handles clutter removal and certain occlusion issues easily, since
it renders what the markers indicate rather than what is seen.

From a large number of photographs of a place or object,
visual features, such as SIFT features \cite{Lowe99}, 
can be extracted and rendered as point cloud models
\cite{Snavely06, Snavely08}. More generally, 
the area of `point based graphics' provides methods for
representing surfaces by point data, without requiring other 
graphics primitives such as meshes \cite{GrossPfister07}.  These methods 
have been used as primitives for modeling tools
\cite{pointset3D}. Amenta {\it et al.}~\cite{Amenta} describe the point 
based notion of `surfels' which are points and normals. Our markers 
specify one dimension over a surfel: the orientation of the marker 
within its plane. Generally, point based methods use large numbers 
of surface points, and aim to produce smooth surfaces.
Our system aims to produce polyhedral models with low polygon count
from sparse geometric data.

There is a rich history of work on reconstructing polyhedra
from partial descriptions. See Lucier \cite{Lucier} for a survey. 
There is little work, however, on reconstructing polyhedra from 
sparse point-plane or  point-normal data, let alone more complex 
metadata. Biedl {\it et al.} discuss several polygon reconstruction 
problems based on point-normal data and related data \cite{Biedl2008}. 
Their reconstruction results are limited to two dimensions.

\section{Overview of the Pantheia System.}

\begin{figure}
\centering
\includegraphics[width=3.5in,height=2.4in,viewport=0 60 750 550,clip]{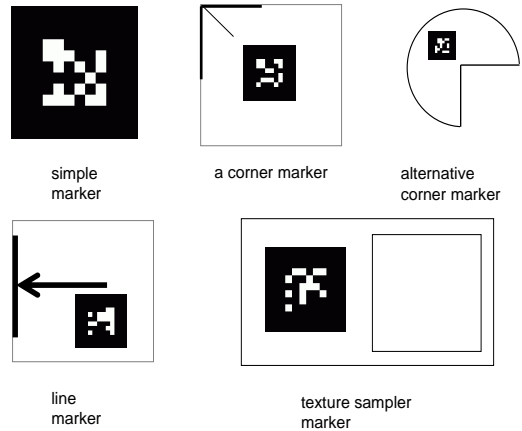}
\caption{\label{figure:markerExamples}Examples of visual markup markers}
\end{figure}

For markers, the Pantheia system uses the two-dimensional 
two dimensional barcode style markers of ARToolKitPlus \cite{ARToolKit}
(see Figure \ref{figure:markerExamples}).
Pantheia \cite{Rieffel09,Kimber09} takes as input a set of images, identifies
the markers in each image, and determines the 
relative pose of the marker to camera for each marker in each image. 
If the pose of a marker in the world is known then the 
pose of the camera for each image
can be determined. Conversely, if the pose of a camera is known  then
the pose of any marker identified in that image can be calculated.
From a set of images that meets a few simple conditions, 
the pose of every marker and the pose of the camera for every image 
can be estimated. Pantheia
obtains good estimates using ARToolKitPlus \cite{ARToolKit}
together with the sparse bundle adjustment package SBA \cite{SBA} that
globally optimizes the pose estimates for all markers and images.

Pantheia creates models using a markup language \cite{Kimber09}
that includes elements specifying 
appearance characteristics, interactive elements, and geometric 
properties of the scene.
This paper concerns only the geometric reconstruction aspect
of the system supported by the following marker language elements: 
\begin{itemize}
\item planar (plane, wall, ceiling, floor, door), 
\item shape (parametric shape, extrusion), 
\item modifier (edge, corner). 
\end{itemize}

This paper reports on results of our ongoing effort 
to define a simple yet powerful markup language and to develop robust 
reconstruction algorithms that together support the creation of a large
class of virtual models. The output of our system can be thought of as the 
description of a virtual model and its dynamic capabilities in terms 
of a language such as COLLADA \cite{collada}, which supports the 
expression of physics, or of a language such as VRML \cite{vrml} together 
with physics specifications for a physics engine such as ODE \cite{ODE}
in order to support interactive elements. Pantheia 
currently saves models as VRML together with metadata files that 
specify relations between named parts of the VRML scene graph. 

\label{robustness}
The user is encouraged, when placing markers on the same plane, to
use markers that have the same plane ID. Furthermore, planes that 
are close to aligned are forced to align by averaging their normal vectors. 
Finally, there is an option 
that takes all planes close to being aligned with one of the coordinate
planes and snaps them to being precisely parallel to the coordinate plane. 
These features all help with the robustness of the estimation.
Accuracy results for an early version of the system are reported 
in \cite{Kimber09}. More details on the Pantheia system can be
found in both \cite{Kimber09,Rieffel09}.

As Pantheia's designers, we can choose
which markup strategies to suggest to users, which marker meanings
to make available, and which reconstruction algorithm to use. 
Our aim is to design markup strategies that, for a broad class of models,
specify a unique model, are easy for users to understand and carry out,
and have an efficient reconstruction algorithm.

\begin{figure}
\centering
\includegraphics[width=3.1in,height=1.8in]{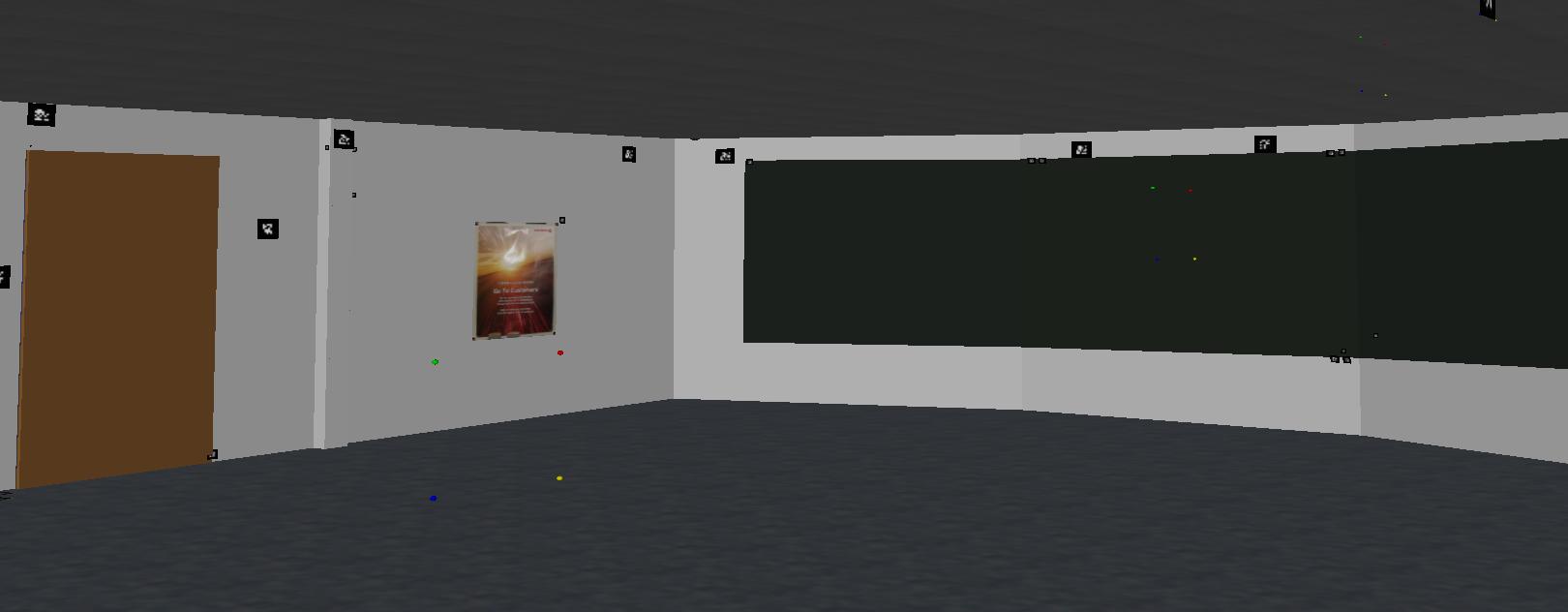}
\includegraphics[width=3.1in,height=1.8in]{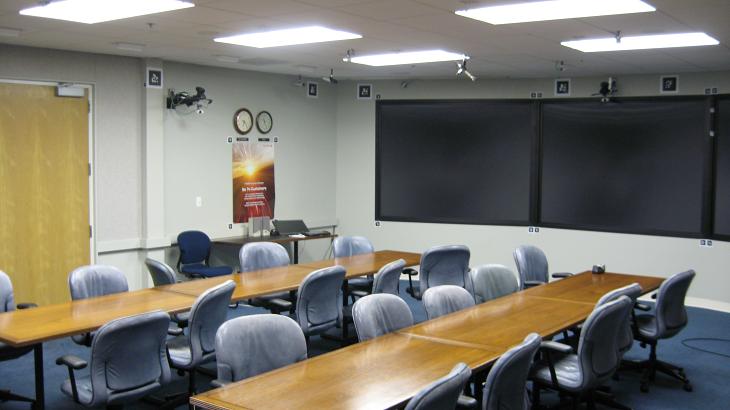}
\caption{\label{figure::kumoModel}Pantheia generated model of a room}
\end{figure}


\section{Formal Framework.}
\label{sec:formalFramework}

This section contains definitions used through the paper, and sets up
the formal framework in which we discuss markup strategies and
reconstruction algorithms. The framework is somewhat abstract in that
we assume that the position and orientation of all markers is known. 
When marker poses are obtained using vision techniques, some
markers may not be visible to a camera. For those markers, a pose 
cannot be obtained. We do not handle these cases, nor do we discuss
robustness questions. Both are interesting areas for future work.

Section \ref{basicDefs} begins with some
basic geometric definitions. Section \ref{markupDefs} formally
defines a {\em markup description} and related concepts.
Section \ref{markerData} defines various types of marker data 
considered in this paper. Section \ref{markupStrategies} discusses
the markup strategies considered in this paper.
Section \ref{elaboration} covers elaboration techniques for
constructing more complex polyhedra from simpler ones.

\subsection{Basic geometric definitions.}
\label{basicDefs}

We begin with some basic definitions. 
This paper is primarily concerned with
constructing polyhedra from marker data. Surprisingly, there are a number
of variations as to how the terms ``polygon" and  ``polyhedron" are defined. 
For clarity, we state the definitions we use in this paper.

\begin{Definition}
{\rm A {\em polygon} is a closed, connected, two-dimensional region of a plane
whose boundary consists of a finite set of segments, $e_1, e_2, ... e_n$,
called {\em edges} with endpoints, $v_1, v_2, ... v_n$,
called {\em vertices}.  Furthermore, every vertex $v_i$  is the endpoint of two
edges, each edge ends in two vertices, and two edges can intersect only in 
a vertex.}
\end{Definition}

\begin{Definition}
{\rm A {\em polyhedron, pl.~polyhedra,} is a closed, connected, 
three-dimensional volume of three-dimensional Euclidean space 
whose boundary consists
of a finite set of polygons called {\em faces} such that every
edge of every polygon is shared with exactly one other polygon, 
two faces may intersect only at edges or vertices shared by the
two polygons, and the faces that share a vertex can be ordered 
so that face $f_i$ shares an edge with $f_{i+1}$ (mod $q$, 
the number of such faces).}
\end{Definition}

This definition allows polyhedra to have holes. Our theorems hold
for polyhedra of arbitrary genus.

\begin{Definition} {\rm A region is {\em convex} if, for every pair of
points in the region, the line segment connecting those points
is entirely contained in the region.}
\end{Definition}

\begin{Definition}
A polyhedron is {\em orthogonal} if all of its faces 
are parallel to one of the three coordinate planes.
{\rm A set of polyhedra is {\em orthogonal} if all
of the polyhedra in the set are orthogonal.}
\end{Definition}

\subsection{Markup definitions.}
\label{markupDefs}

We model a marker as a numeric identifier together with metadata.  
Minimally the metadata includes the position of each marker. 
The metadata may include other information about the marker or its 
placement, such as the outward pointing normal, or meanings from the 
markup language associated with sets of markers. We assume that all 
marker information is known -- that all markers have been seen and
robustly estimated.

\begin{Definition}
{\rm A {\em markup description} of a model is a set $I_M$ of marker 
indices, together with the metadata associated with those markers.}
\end{Definition}

Let $\mClass$ be the class of models under consideration.

\begin{Definition}
{\rm A {\em markup placement strategy} $\fPlace$ is a mapping
$\fPlace : \mClass \rightarrow \dClass$ from the class of models
$\mClass$ to the space of markup descriptions $\dClass$.}
\end{Definition}

A markup strategy may or may not be deterministic.  For
example, a deterministic strategy might be, ``for each face of the
model, place a marker on the face at the centroid of its vertices".
A nondeterministic strategy might be, ``For each face of the model,
place a marker somewhere on the face".

\begin{Definition}
{\rm A {\em markup reconstruction rule} $\fRecon$ is a mapping from the
class of markup descriptions $\dClass$ to the class of models $\mClass$. 
A {\em markup reconstruction algorithm} is a procedure which
implements a reconstruction rule.}
\end{Definition}

\begin{Definition}
{\rm A {\em markup system} is a markup placement strategy $\fPlace$ 
together with a reconstruction rule $\fRecon$.  A markup system is 
complete and faithful if for
every $m \in \mClass$, $\fRecon( \fPlace( m )) = m$.}
\end{Definition}

\subsection{Marker data and metadata types.}
\label{markerData}

This paper considers three main types of marker data.

\begin{Definition}
{\rm {\em Point marker data} consists of the position of the center
of the marker, or the position of a point specified by the metadata
relative to the center of the marker. The corner marker shown in
Figure \ref{figure:markerExamples} is an example of a marker that specifies
a position that is not the center of the marker.}
\end{Definition}

\begin{Definition}
{\rm {\em Point-plane metadata} consists of the position
of the marker and the plane in which it lies.}
\end{Definition}

\begin{Definition}
{\rm {\em Point-normal metadata} consists of the position and outward
pointing normal at each marker. }
\end{Definition}

To each of these basic types, various levels of additional  metadata can 
be added. Useful types of additional metadata include IDs that indicate
that all markers with that ID share a property such as being on the
same face or defining the same polyhedron, orderings of a set of markers
that indicate, for example, the order in which to traverse the vertices of
a face, and relationships, such as two faces share an edge.

\subsection{Some Markup Strategies.}
\label{markupStrategies}

\begin{Definition}
{\rm The {\em marker-per-face} markup strategy is any placement of at
least one marker on each face.} 
\end{Definition}

Most of this paper will discuss reconstruction from a marker-per-face
strategy with point-plane or point-normal metadata and possibly 
additional metadata. As mentioned in Section \ref{sec:relWork},
much more common in the literature are discussions of reconstruction
from a vertex markup strategy in which every vertex is marked. We are 
less interested in such strategies than marker-per-face strategies
because they require more precision in placement on the part of 
a user and require markers to be placed in places that may be
out of reach or even hidden. In Section \ref{sec:other}, we discuss
a few results related to vertex or edge markup strategies. 

\subsection{Elaboration.}
\label{elaboration}

Elaboration is a way to create a more complex polyhedron from a 
base polyhedron by gluing a polyhedron to a face or removing a
polyhedron aligned with the face from the base polyhedron.
{\it Extrusions} and {\it intrusions} are special cases of 
elaboration. 
The reconstruction results related to elaboration focus on
reconstruction of polyhedra that can be obtained by
elaborating a convex polyhedra with extrusions and intrusions 
of convex polygons.


\begin{Definition}
{\rm {\em Orthogonal Polygonal Extrusion} or just {\em Extrusion: } 
A polygon in the interior of a face of the base polyhedron
is extruded outward, perpendicular to the face, a constant amount.}
\end{Definition}

\begin{Definition}
{\rm {\em Orthogonal Polygonal Intrusion} or just {\em Intrusion: } 
A polygon in the interior of a face is pushed inward, 
perpendicular to the face, a constant amount.}
\end{Definition}

We will also consider polyhedra obtained by taking the union
of separately defined but intersecting polyhedra, including
ones obtained by {\em gluing} along a shared face.

\begin{figure}
\centering
\includegraphics[width=3.0in,height=2.0in]{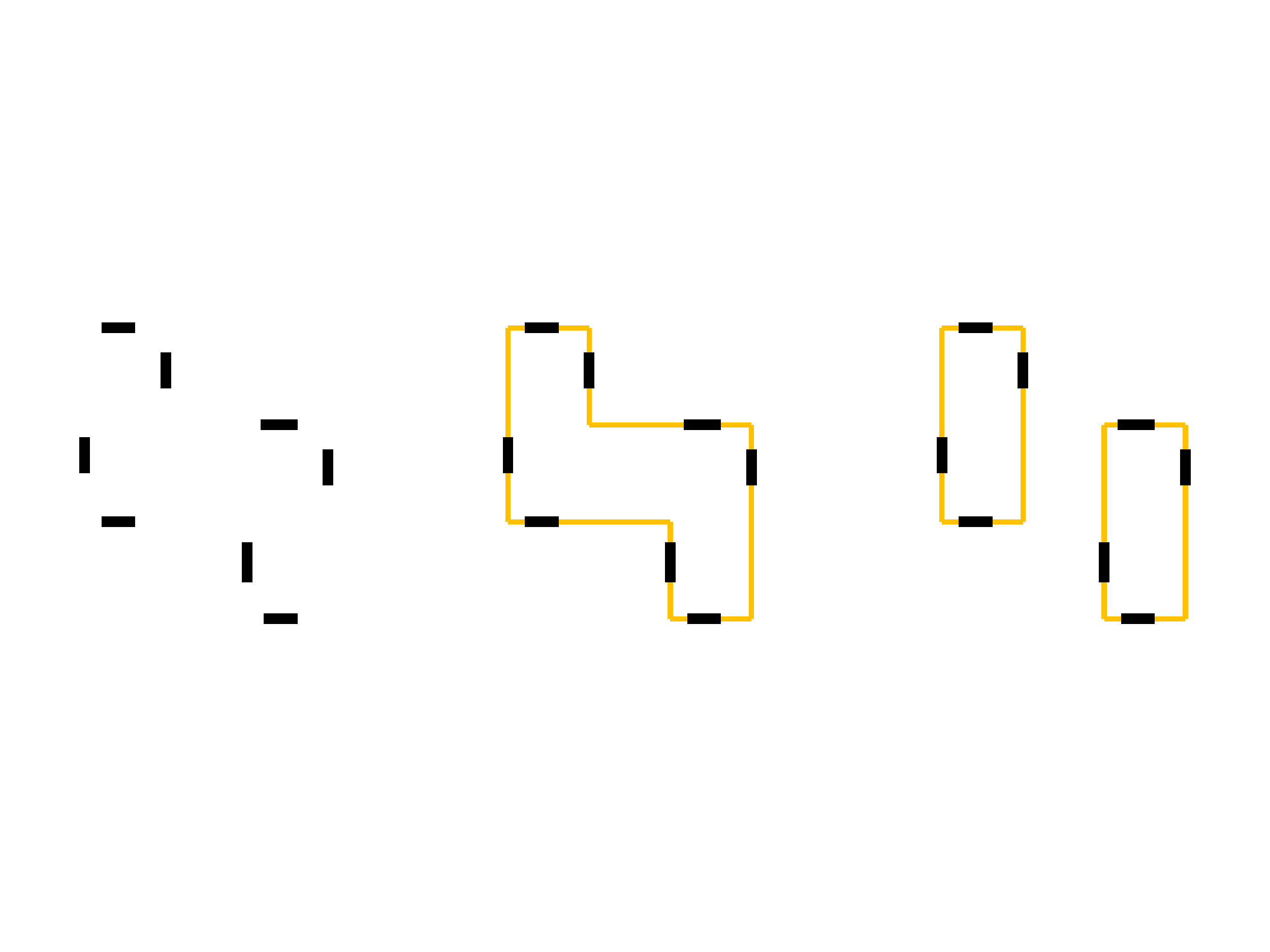}
\caption{\label{figure:simpleAmbi}A configuration of markers 
ambiguous under the ``single marker per face'' markup strategy.  
Two polyhedral interpretations of this 
marker set are shown.}
\end{figure}

\section{Reconstruction Results.}
\label{sec:results}

\begin{figure*}
\centering
\includegraphics[width=2.0in,height=1.8in]{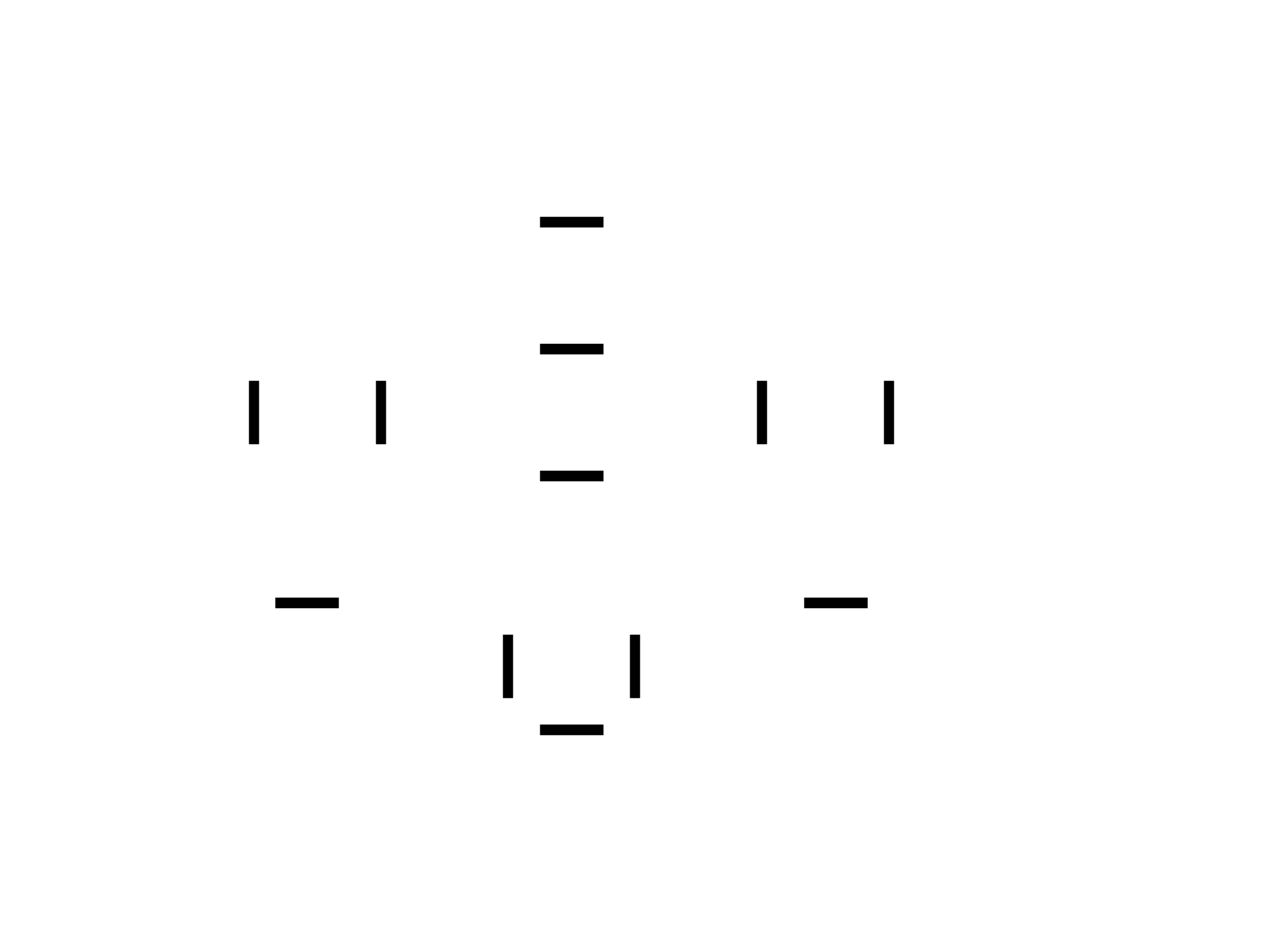}
\includegraphics[width=2.0in,height=1.8in]{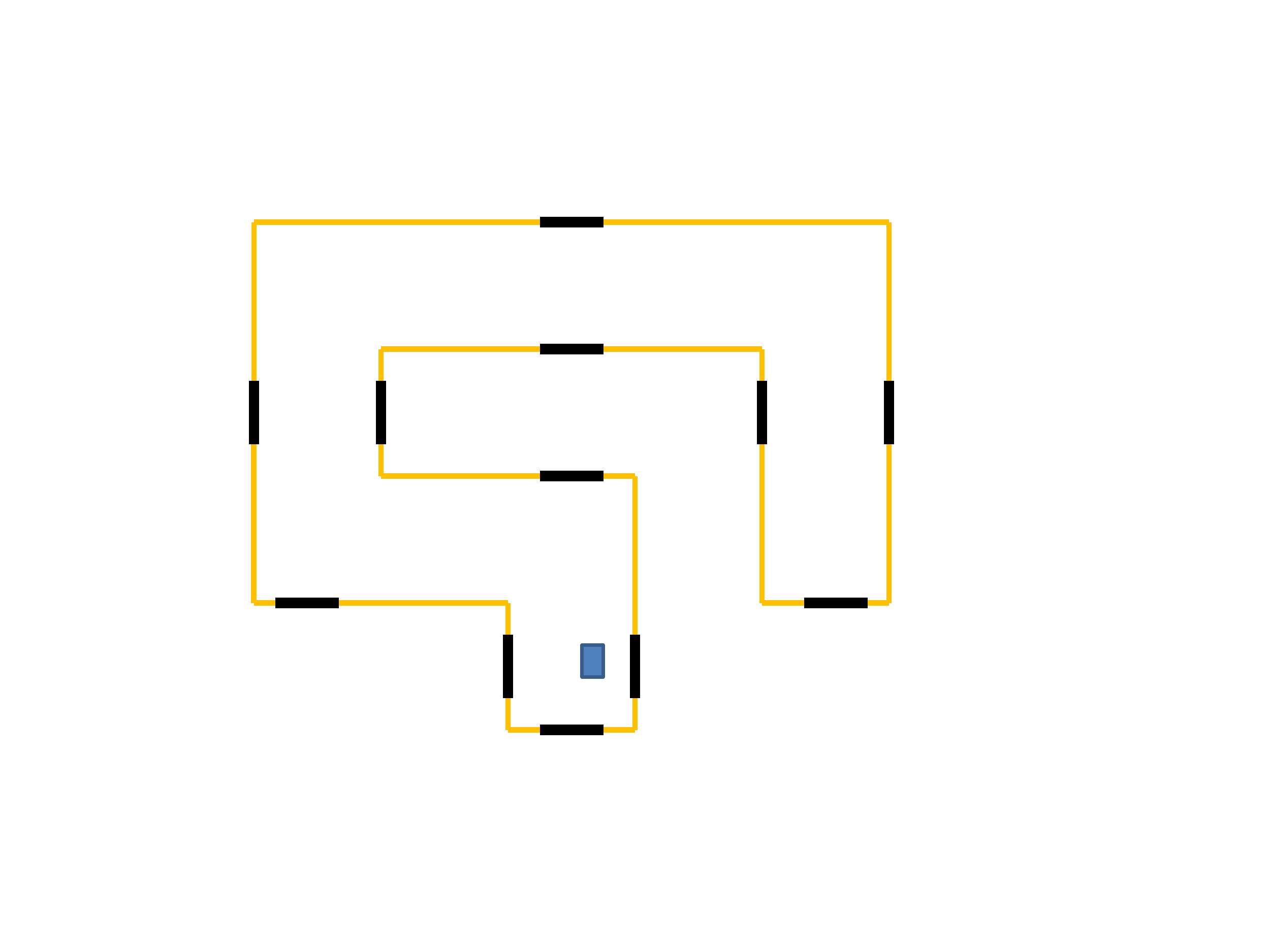}
\includegraphics[width=2.0in,height=1.8in]{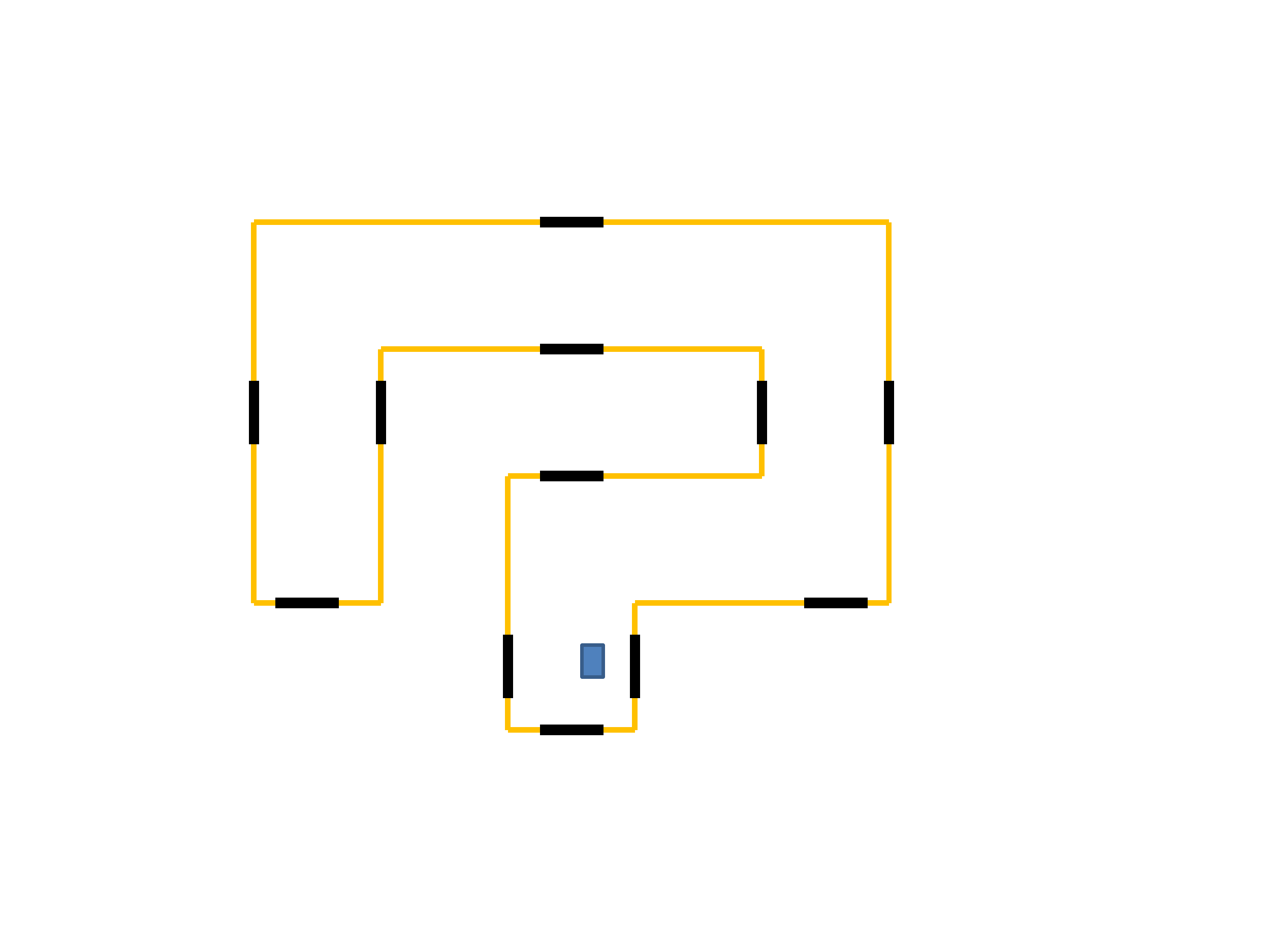}
\caption{\label{figure:ambiguousMarkers}A configuration of markers 
that supports two different, but topologically equivalent, interpretations
under the ``single marker per face'' markup system.} 
\end{figure*}

Our goal is to understand under what circumstances unambiguous 
reconstruction is possible from complete knowledge of all marker
poses and metadata. At the same time we would like the user's
task to be as simple as possible. We have two starting points, one easy
for the user but supporting reconstruction of only a small class of 
polyhedra, the other general but burdensome on the user.

\begin{theorem}[Convex polyhedron] 
\label{convex}
A marker-per-face strategy with point-plane metadata 
is sufficient to uniquely specify an arbitrary convex polyhedron.
\end{theorem}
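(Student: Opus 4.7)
The plan is to recover the convex polyhedron $P$ as the intersection of closed half-spaces, one per face, read directly from the point-plane metadata.

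First, I would invoke the standard H-representation of convex polyhedra: if $P$ is a convex polyhedron with face-supporting planes $\Pi_1,\dots,\Pi_n$, then $P = \bigcap_{i=1}^n H_i$, where $H_i$ is the closed half-space bounded by $\Pi_i$ that contains $P$. Thus reconstruction reduces to two subtasks: (a) identify the set $\{\Pi_1,\dots,\Pi_n\}$ of face planes, and (b) for each $\Pi_i$, pick the correct side $H_i$.

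Subtask (a) is immediate from the marker data: every marker carries a plane tag equal to the plane of its face, so collecting and deduplicating the tagged planes across all markers (faces that receive multiple markers yield the same plane) produces exactly $\{\Pi_1,\dots,\Pi_n\}$. For subtask (b), I would use the remaining markers as probe points. For any $j\neq i$, the marker $p_j$ lies on face $F_j\subset P\subset H_i$, so $p_j\in H_i$; generically $p_j$ lies strictly on the $P$-side of $\Pi_i$. The reconstruction rule ``among the two half-spaces bounded by $\Pi_i$, choose the one containing the other markers'' therefore returns $H_i$. Intersecting the recovered $H_i$ over all $i$ reproduces $P$ exactly.

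The one subtlety, which I expect to be the main obstacle, is a degenerate configuration: $p_j$ can lie on $\Pi_i$ precisely when it sits on the edge $F_i\cap F_j$, and in principle all markers on faces adjacent to $F_i$ could land on edges of $F_i$, leaving no probe point to disambiguate the side of $\Pi_i$. I would dispose of this either by strengthening the marker-per-face strategy to place markers in the relative interior of each face, or by noting that such edge-only placements are non-generic and can be excluded without loss; both options preserve the statement and make the side-selection step unambiguous. With this mild caveat, the argument is essentially the observation that convex polyhedra are H-representable from their facet planes, together with the trivial probe-point side selection above.
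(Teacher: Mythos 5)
Your proposal is correct and follows essentially the same route as the paper: identify the face planes from the metadata, select for each plane the half-space containing the remaining markers, and intersect. The only real difference is that you outsource the uniqueness step to the standard H-representation of convex polyhedra, where the paper proves it directly via a lemma that a convex polyhedron lies on one side of each face plane and a corollary that a face plane meets any other face only in an edge; your observation about the degenerate all-markers-on-edges placement is a legitimate corner case that the paper silently ignores.
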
 

\begin{theorem}[Dense markup] 
\label{dense}
For any scene containing finitely many polyhedra, 
if markers are placed sufficiently densely, 
then the scene can be reconstructed. More precisely, for 
any scene containing a finite number of polyhedra, 
there is an $\epsilon$ such that if
every point is within $\epsilon$ of a marker, then the scene can be
unambiguously reconstructed.
\end{theorem}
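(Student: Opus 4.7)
The plan is to choose $\epsilon$ as a quantity determined by several geometric features of the scene and then extract the faces as clusters of markers that share a common supporting plane (using the point-plane or point-normal metadata). Concretely, let the scene contain faces $F_1,\dots,F_N$ with supporting planes $\Pi_1,\dots,\Pi_N$. Define $\delta$ to be the minimum of the following positive quantities: (a) the smallest in-radius of any face $F_i$; (b) the smallest distance between any two points lying on distinct non-adjacent faces; (c) the smallest edge length appearing in any face; and (d) a function of the smallest dihedral angle at any edge, small enough that a marker at distance at most $2\epsilon$ from a face $F_i$ and with normal equal to that of $F_i$ can only belong to $F_i$ itself, not to a neighboring face. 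Then take $\epsilon < \delta/C$ for a small absolute constant $C$ (e.g.\ $C=10$ suffices after bookkeeping).

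Given such an $\epsilon$, the reconstruction proceeds in three stages. First, I would partition the marker set into groups that share a common plane: two markers belong to the same group if their positions lie in a common plane (detectable from either point-plane or point-normal data) and they can be connected by a chain of same-plane markers with consecutive markers within distance, say, $3\epsilon$. By the choice of $\delta$, two faces lying in the same plane but on different polyhedra (or different components of the same polyhedron) must be separated by more than $3\epsilon$, so each group corresponds to exactly one face. Second, on each such group I would reconstruct the face boundary: since every point of $F_i$ lies within $\epsilon$ of some marker, and since adjacent faces contribute markers whose planes intersect $\Pi_i$ in a line within $O(\epsilon)$ of the true shared edge, an $\alpha$-shape or analogous boundary-tracing construction with parameter tuned to $\epsilon$ recovers $F_i$ up to slack that can be eliminated by snapping candidate edges to the intersection lines of the plane groups identified in stage one.

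Third, I would assemble faces into polyhedra: two reconstructed faces are declared to share an edge if their plane-intersection line lies within $O(\epsilon)$ of markers in both groups, and a polyhedron is then a connected component of the resulting face-adjacency graph. By the choice of $\epsilon$, the shared-edge relation among reconstructed faces is identical to the shared-edge relation among the true faces, so the connected components coincide with the polyhedra in the scene, and within each component the combinatorial structure (which face shares which edge with which) is recovered exactly. Since each reconstructed face is congruent to the true face after the snapping step, the reconstructed scene equals the original scene.

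The main obstacle is the bookkeeping around stage two, in particular ensuring that the boundary reconstruction of a face is exact (not merely $\epsilon$-close). Near a shared edge, one must argue that the markers on $F_i$ approach, but do not cross, the line $\Pi_i\cap\Pi_j$, and that the snapping step resolves the $O(\epsilon)$ uncertainty unambiguously; this is where the constants in the dihedral-angle term (d) and the edge-length term (c) of $\delta$ must be chosen carefully, and where the assumption of finitely many polyhedra is essential, since otherwise no uniform lower bound on these features would exist.
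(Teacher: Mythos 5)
There is a genuine gap, and you have in fact pointed at it yourself: the entire difficulty of the theorem is concentrated in your stage two, where an $\epsilon$-dense sample of a face must yield the face \emph{exactly}, and that step is asserted rather than proved. An $\alpha$-shape built from the markers gives only an $O(\epsilon)$-approximate boundary, and the proposed fix of ``snapping'' candidate edges to intersection lines $\Pi_i\cap\Pi_j$ is not well defined near vertices of the face, where several candidate lines pass within $O(\epsilon)$ of one another no matter how small $\epsilon$ is; nor does snapping by itself tell you \emph{which segment} of a given intersection line bounds the face, or on which side of that line the face lies (this matters for non-convex faces). Separately, your definition of $\delta$ is not always positive: quantity (b) vanishes whenever two non-adjacent faces meet at a vertex, or when coplanar faces of distinct touching polyhedra have distance zero, both of which are allowed by the paper's definitions. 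So as written the constant $\epsilon$ need not exist and the reconstruction rule is not shown to be exact.

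The paper avoids all of this with one observation you are missing: the supporting planes are known \emph{exactly} from the metadata, so the set of candidate geometries is finite and combinatorial. Intersect each plane with all the others to partition it into cells; every face of the scene is a union of these cells; each bounded cell contains a disk of positive radius, and $\epsilon_{\min}$ is the minimum such radius over the finitely many cells. If every point of every face is within $d<\epsilon_{\min}$ of a marker, then every cell belonging to the scene contains a marker (its inscribed disk's center must be within $d$ of some marker, which therefore lies inside the cell), and no other cell does. The reconstruction rule is simply ``keep the cells that contain a marker,'' which is exact by construction --- no approximate boundary tracing, no snapping, and no dependence on dihedral angles or inter-face distances. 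If you want to salvage your approach, the repair is essentially to replace stages two and three with this arrangement-of-planes argument.
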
 

The problem with the marker-per-face approach is that 
Theorem \ref{nonConvex} and Theorem \ref{multConvex} show that 
both convexity, and the requirement that the scene contains only 
one convex polyhedron, are necessary to guarantee unambiguous 
reconstruction. The two main concerns with respect to
dense placement are that while there is a constructive means to
determine $\epsilon$, as the proof shows, it is not easy 
for a user to visually determine what density suffices, and it requires
users to place many more markers than are necessary, including in
places that may be hard to reach or even hidden.

The rest of this section pursues ways of enabling users to specify
broader classes of models while keeping
the burden on the user low in terms of 
the number of markers a user must place, the complexity of the instructions,
and the precision with which the user must place the markers. 

\begin{figure*}
\centering
\includegraphics[width=2.0in,height=1.8in,viewport=160 150 540 450,clip]{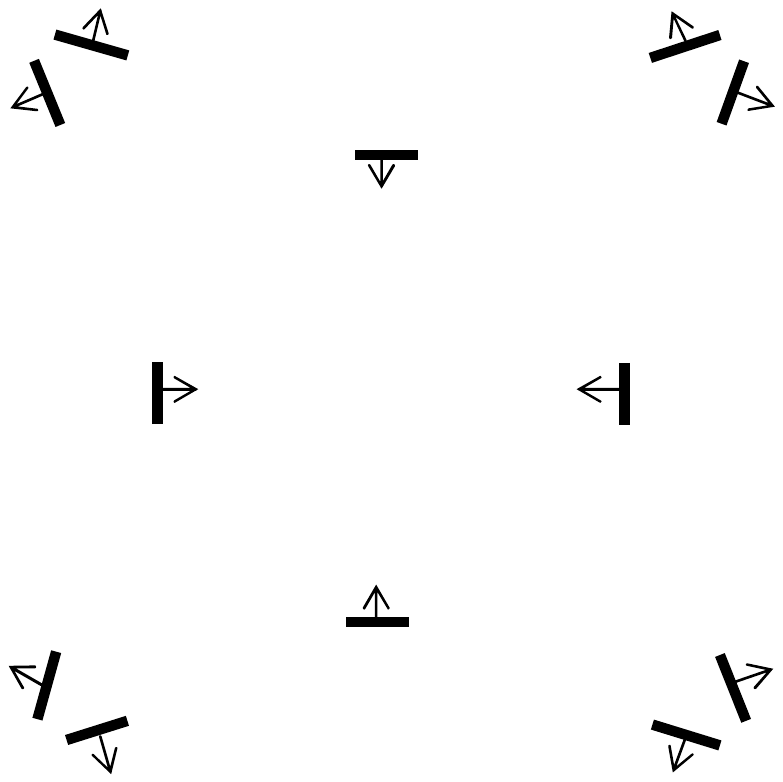}
\includegraphics[width=2.0in,height=1.8in,viewport=160 150 540 450,clip]{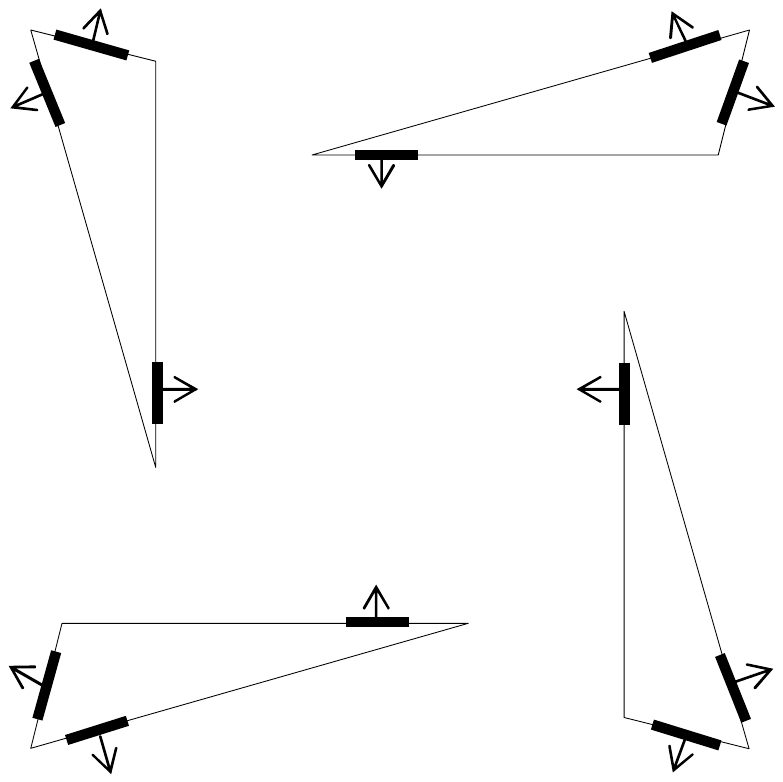}
\includegraphics[width=2.0in,height=1.8in,viewport=160 150 540 450,clip]{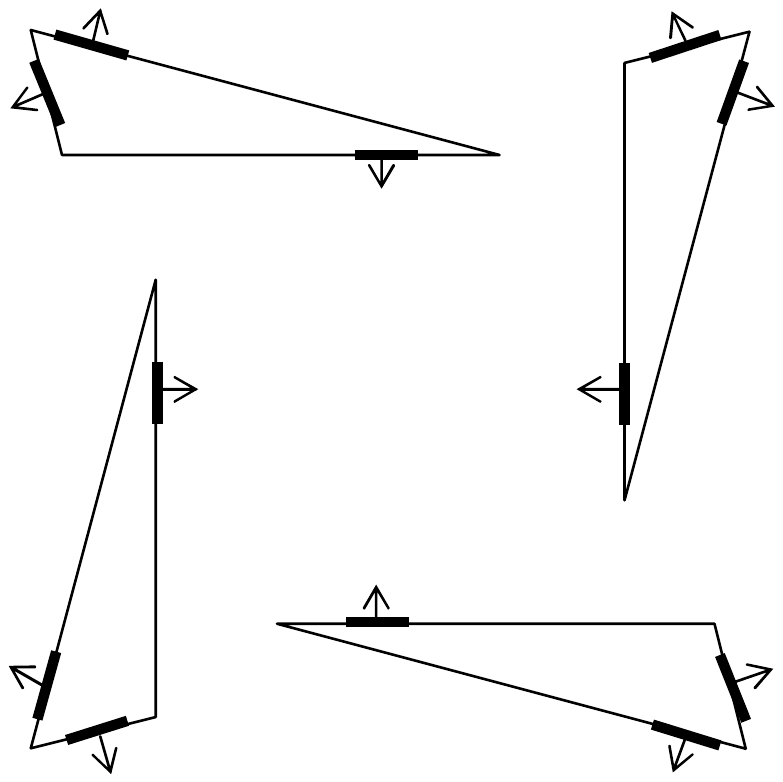}
\caption{\label{figure:ambiguousConvex}A configuration of markers that is 
ambiguous for the `collection of non-intersecting convex polygons' model 
class with the `single marker per edge' markup rule.}
\end{figure*}


Short proofs follow or preface the statement of the results. Longer proofs are
contained in Appendix \ref{proofs}. For example, the proofs of both
Theorem \ref{convex} and Theorem \ref{dense} are contained in Appendix
\ref{proofs}.

\subsection{Point-plane marker-per-face results.}
\label{subsec:pointPlane}

Theorem \ref{convex} states that 
a marker-per-face strategy with point-plane metadata suffices to 
uniquely reconstruct a convex polyhedron.
Convexity is required, however, to avoid ambiguity. 

\begin{theorem} 
\label{nonConvex}
A marker-per-face strategy with point-plane metadata 
is not sufficient to uniquely specify a non-convex polyhedron.
\end{theorem}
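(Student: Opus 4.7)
My plan is to prove the theorem by exhibiting a single marker-per-face point-plane markup that admits two distinct polyhedral interpretations, at least one of which is non-convex. Because a point-plane marker records only that some face contains the marker point and lies in the marker's plane --- with no information about which half-space is the interior --- the reconstruction rule must independently commit to an orientation for each plane. For a convex shape the orientations are forced (each interior side is the one containing every other marker), but for a non-convex shape this choice is not forced locally, and this is where the ambiguity should arise.

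For the concrete counterexample I would use (a formalization of) the configuration depicted in Figure \ref{figure:simpleAmbi}. The construction pattern is as follows: take two axis-aligned boxes, glued along a partial face or joined so that one protrudes from the other to form an L-shaped or stepped prism, and place a single marker in the relative interior of each face. The positions can be chosen so that the resulting set of (point, plane) pairs is simultaneously realized by a second, non-congruent polyhedron in which the protrusion has been relocated, reoriented, or re-attached on a different portion of the base --- every supporting plane of $P$ is also a supporting plane of $P'$, and every marker lies on a corresponding face of each.

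The verification decomposes into three finite geometric checks: (i)~the chosen solid $P$ satisfies the polyhedron definition of Section \ref{basicDefs} and is non-convex; (ii)~the alternative solid $P'$ satisfies the same definition and is distinct from $P$; and (iii)~every marker $m_i$ lies in the relative interior of a face of both $P$ and $P'$, with the supporting planes of those two faces agreeing. Steps (i) and (iii) are routine coordinate checks once $P$ and the marker positions are written down explicitly.

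The main obstacle is engineering $P$ and the marker locations so that the alternate face assembly $P'$ genuinely closes up into a valid polyhedron --- every edge shared by exactly two faces, the cyclic ordering condition at each vertex satisfied, no interpenetrating faces --- and so that no marker accidentally ends up on a face boundary or outside every face of $P'$. Once such an explicit pair has been produced, as in Figure \ref{figure:simpleAmbi}, the remainder of the argument is mechanical, and the theorem follows because $\fRecon$ cannot possibly return both $P$ and $P'$ from the same markup.
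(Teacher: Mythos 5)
Your proposal is correct and takes essentially the same route as the paper: the paper's proof also consists of exhibiting the ambiguous marker configuration of Figure~\ref{figure:simpleAmbi}, in which one set of point-plane markers is consistent with two distinct polyhedral interpretations. Your additional outline of the verification steps (both solids are valid polyhedra, they are distinct, and every marker lies in the relative interior of a face of each with matching supporting planes) just makes explicit what the paper leaves to the figure.
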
 

\begin{proof} 
Figure \ref{figure:simpleAmbi} provides a simple counterexample. 
This example forms the basis for a more complex arrangement of
markers that are consistent with two different polyhedra,
shown in Figure \ref{figure:ambiguousMarkers}, that
have the same topology.
Both examples can be extended to three dimensions by placing two
markers on both an upper and lower face in the positions marked
by the gray squares.
\end{proof}

Similarly, that there is only one polyhedron in the scene is required
to avoid ambiguity.

\begin{theorem} 
\label{multConvex}
A marker-per-face strategy with point-plane metadata 
is not sufficient to uniquely specify multiple convex polyhedra.
\end{theorem}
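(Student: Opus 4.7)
My plan is to prove the theorem by exhibiting a counterexample, in direct analogy with the proof of Theorem \ref{nonConvex} and illustrated by Figure \ref{figure:ambiguousConvex}. The goal is to construct a marker configuration satisfying the marker-per-face strategy with point-plane metadata that admits two distinct valid interpretations as collections of convex polyhedra.

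First, I would construct an initial scene consisting of two or more convex polyhedra whose union covers a simple region, say a $2 \times 2 \times 1$ box decomposed into two $1 \times 2 \times 1$ boxes glued along a vertical internal face, and place exactly one marker on each face of each polyhedron. This furnishes interpretation~A. Next, I would exhibit an alternative interpretation~B using a different convex decomposition of the same region, for example the decomposition of the $2 \times 2 \times 1$ box into two $2 \times 1 \times 1$ boxes glued along a horizontal internal face. The marker positions on outer faces can be chosen to lie at places where the planes agree with both partitions (e.g.\ at points on the outer boundary lying in both decompositions' face planes), and the markers on the internal faces are placed so that under each interpretation they lie on the correct internal dividing plane of that interpretation.

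The key enabling observation is that a markup description records only a pooled set of marker positions and planes; it does not specify which polyhedron in a multi-polyhedron scene a given marker belongs to. This freedom of assignment is what the counterexample exploits: markers that are internal-face markers in interpretation~A can be reassigned, as a set, to be the internal-face markers of a differently oriented partition in interpretation~B.

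The principal obstacle is matching marker counts on each plane across the two interpretations while maintaining one marker per face. Because the two interpretations typically place their internal faces on different planes, a naive construction produces different marker sets on those planes. To overcome this, I would choose the geometry so that the ``internal'' dividing planes of both interpretations are among the planes occupied by markers in both, and pick marker positions that simultaneously satisfy both interpretations' face-incidence constraints. The three-dimensional counterexample is then obtained by vertical extrusion of the two-dimensional configuration in Figure~\ref{figure:ambiguousConvex}, with additional markers on the top and bottom faces placed at positions that are shared by the corresponding two-dimensional layouts.
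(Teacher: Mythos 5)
Your overall strategy---exhibit one marker set admitting two interpretations, in analogy with Theorem \ref{nonConvex}, and lift a planar example to 3D by extrusion---is exactly what the paper does (its entire proof is a citation of Figure \ref{figure:ambiguousConvex}). But the concrete counterexample you build the argument around does not work, and the failure is fatal rather than a matter of tuning marker positions. In interpretation A the two internal glued faces lie in the plane $x=1$; in interpretation B they lie in the plane $y=1$. Point-plane metadata records, for every marker, the plane containing the face it sits on, so a marker-per-face markup of interpretation A must contain at least one marker whose recorded plane is $x=1$, and that marker cannot lie on any face of interpretation B, which has no face in that plane. Symmetrically, B forces a marker in plane $y=1$ that A cannot absorb. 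No choice of marker \emph{positions} can repair this, because the obstruction is in the plane metadata, not the points. Your proposed fix---``choose the geometry so that the internal dividing planes of both interpretations are among the planes occupied by markers in both''---cannot be realized by axis-aligned decompositions of a common box: once the two interpretations are required to use exactly the same set of face planes with at least one marker per face, the orthogonal decompositions coincide.

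The actual ambiguity has a different source, which your ``key enabling observation'' gestures at but your construction does not exploit: with non-orthogonal convex polygons, the \emph{same} collection of marked supporting lines (planes) can be partitioned into two different collections of disjoint convex pieces, because a marker only constrains which plane its face lies in, not which polyhedron that face bounds or how far the face extends. That is what Figure \ref{figure:ambiguousConvex} depicts, and it is the content your proof ultimately defers to without supplying. To complete the argument you would need to actually construct such a planar configuration (two disjoint convex polygons whose edge markers are also fully consistent with a different pair of disjoint convex polygons) and then extrude it, adding top and bottom markers at points common to both interpretations as is done for Theorem \ref{nonConvex}.
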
 

\begin{proof}
Figure \ref{figure:ambiguousConvex}.
\end{proof}

Theorem \ref{nonConvex} and Theorem \ref{multConvex} hold 
for point-normal metadata as well.

The ambiguities in the case of multiple convex polyhedra is easily
overcome by adding a polyhedron ID to the metadata so that all markers
placed on the same polyhedron have the same polyhedron ID and markers
placed on different polyhedra have different polyhedron IDs. The user's
task is now slightly more burdensome in that the user must keep track
of different sets of markers for different polyhedra in the scene, but
that is a relatively light additional burden.

\begin{theorem} 
\label{multConvexWithPolyID}
A marker-per-face strategy with point-plane and polyhedron ID metadata 
suffices to uniquely specify multiple convex polyhedra.
\end{theorem}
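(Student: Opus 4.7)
The plan is to reduce this theorem to Theorem \ref{convex} by partitioning the marker set by polyhedron ID. Let the scene contain convex polyhedra $P_1, \ldots, P_k$, and let $M$ be the set of all markers placed by the marker-per-face strategy, each carrying point-plane and polyhedron ID metadata.

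First I would partition $M$ into subsets $M_1, \ldots, M_k$ indexed by polyhedron ID, so that $M_i$ is exactly the set of markers placed on $P_i$. Because all markers on the same polyhedron share an ID and markers on different polyhedra carry distinct IDs, this partition is well defined and recoverable from the metadata alone, without any geometric inference. Since the marker-per-face strategy puts at least one marker on each face of $P_i$, the restricted set $M_i$, viewed in isolation, is a valid marker-per-face point-plane markup description for $P_i$.

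Next I would invoke Theorem \ref{convex} on each $M_i$ independently. That theorem guarantees that any convex polyhedron is uniquely determined by its marker-per-face point-plane data, so each $P_i$ is reconstructed unambiguously. The full scene is then recovered as the collection $\{P_1, \ldots, P_k\}$, which establishes completeness and faithfulness of the markup system.

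The main (and quite mild) obstacle I anticipate is ensuring that the partitioning step does not leave room for hidden cross-polyhedron ambiguities, for example when two polyhedra share a coplanar face or when they overlap. Because the polyhedron IDs strictly separate the markers into self-contained subproblems, each application of Theorem \ref{convex} sees only the markers of a single convex body and produces its unique reconstruction without reference to the others; the final model is simply the set-theoretic collection of these reconstructions. The proof will therefore amount to making this partition-and-apply argument explicit and verifying that each $M_i$ satisfies the hypotheses of Theorem \ref{convex}.
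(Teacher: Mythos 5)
Your proof is correct and takes essentially the approach the paper intends: the paper gives no explicit proof of Theorem \ref{multConvexWithPolyID}, but the discussion preceding it and the proof of Theorem \ref{genElab} (which reconstructs each base polyhedron from its own ID-tagged markers via Theorem \ref{convex}) rely on exactly your partition-by-polyhedron-ID reduction. Nothing further is needed.
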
 


We obtain more general results by considering elaborations of simple
polyhedra.

\subsection{Markup Containing Elaboration Information.}
\label{sec:elabResults}

We begin by stating the most general elaboration result we have and 
then discuss situations in which less elaborate markup suffices.

\begin{theorem}[Hierarchical elaboration] 
\label{genElab}
Any polyhedral scene that can be constructed by starting with
a finite set of convex polyhedra and iteratively elaborating faces with
orthogonal convex polygonal extrusions and intrusions can be unambiguously
reconstructed from 
a marker-per-face strategy with point-plane metadata together with
the following additional metadata that precisely specifies an 
elaboration hierarchy: 
each of the polyhedra from which the scene will be built must
have its own polyhedron ID, and each of its faces identified as
faces of the initial set of polyhedra from which the construction is made,
and each elaboration has its own ID
and the markers for an elaboration indicate on which base face the
elaboration is made.
\end{theorem}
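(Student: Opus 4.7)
The plan is to prove this by induction on the depth of the elaboration hierarchy, which is well-defined because the metadata partitions the markers into groups tagged by either an initial-polyhedron ID or an elaboration ID, and each elaboration ID carries a pointer to the base face (and hence, recursively, to the base polyhedron) on which it is performed. At the bottom of the hierarchy sit the initial convex polyhedra; at each higher level sit elaborations whose base faces lie on objects already reconstructed at lower levels.

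For the base case, group the markers by their initial-polyhedron ID and apply Theorem \ref{multConvexWithPolyID} to reconstruct each initial convex polyhedron uniquely from its point-plane marker-per-face data. For the inductive step, assume that every elaboration of depth less than $k$ has been correctly reconstructed, so that in particular the plane of every base face available at depth $k$ is known. An elaboration of depth $k$ is an orthogonal convex polygonal extrusion or intrusion attached to some base face $F$ of a previously reconstructed object. The elaboration-ID markers include one marker per side face and one marker per top face of the resulting prism, each with point-plane metadata. Because the extrusion/intrusion is orthogonal and its cross-section is convex, the prism being glued on or removed is itself a convex polyhedron whose faces are: the top face (plane supplied by its marker), the side faces (planes supplied by their markers, all of which are perpendicular to $F$), and a bottom face lying in the plane of $F$ (already known from the inductive hypothesis). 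Thus we have point-plane data for every face of this convex prism, so Theorem \ref{convex} reconstructs it uniquely. Gluing it onto the current scene (extrusion) or subtracting it (intrusion) updates the scene; whether to glue or subtract is determined geometrically by whether the top-face marker lies on the outward or inward side of $F$ relative to the previously reconstructed polyhedron, which is unambiguous since that polyhedron is already known.

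Iterating this construction in order of increasing hierarchy depth reconstructs the entire scene. Uniqueness is immediate: at every step the reconstructed object is forced by Theorem \ref{convex} (or Theorem \ref{multConvexWithPolyID} at the base), so any competing reconstruction would have to disagree with one of these uniqueness statements.

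The main obstacle I expect is the step that treats the base of each elaboration as an honest face of a convex prism even though no marker is placed there. This is legitimate only because the inductive hypothesis has already pinned down the plane of the base face $F$, and this plane together with the side-face planes and the top-face plane satisfies the hypotheses of Theorem \ref{convex}; one must check that the resulting convex polygon in $F$ actually lies within the boundary of $F$ so that the extrusion/intrusion is a valid elaboration in the sense of Section \ref{elaboration}, but this is guaranteed by the assumption that the scene is constructible in the prescribed way. A secondary subtlety is making sure the extrusion-versus-intrusion disambiguation is well-posed, which is handled by the fact that at depth $k$ the base polyhedron (and hence its interior and exterior) is already determined, so the side of $F$ on which the top-face marker lies is meaningful.
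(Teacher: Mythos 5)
Your overall strategy coincides with the paper's: reconstruct the initial convex polyhedra from their polyhedron IDs, then process the elaboration hierarchy level by level, using the perpendicular side-face markers to recover the cross-section polygon and the parallel markers to fix the depth. Your packaging is slightly different --- you assemble each elaboration into a convex prism (base plane from the inductive hypothesis, side planes and top plane from the markers) and invoke Theorem \ref{convex} on it, whereas the paper projects the perpendicular markers onto the base face and solves the two-dimensional convex problem there --- but these are the same argument in substance, and your explicit treatment of the extrusion-versus-intrusion disambiguation (which side of the base face the parallel marker lies on) is a point the paper leaves implicit.

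There is, however, one case your proof does not cover and the paper's does: an intrusion that passes all the way through the base polyhedron. Such an elaboration has no face parallel to the base --- the far end opens onto another face of the polyhedron --- so there is no ``top-face marker.'' Your appeal to Theorem \ref{convex} then fails, because the half-spaces from the base plane and the side planes alone do not bound a finite convex polyhedron (they bound a half-cylinder), and your glue-versus-subtract test has no marker to test. The paper resolves this by stipulating that when no marker with the given elaboration ID is parallel to the base face, the elaboration must be a through-intrusion, whose extent is then bounded by the rest of the (already reconstructed) polyhedron. This is an easy patch, but as written your induction step is incomplete for that case; note also that the definition of intrusion (pushed inward ``a constant amount'') and the paper's allowance of polyhedra with holes make this case genuinely in scope.
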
 

By including additional ordering information in the metadata, 
extrusions and intrusions from nonconvex polygons can also be handled.

\begin{theorem}
Theorem \ref{genElab} can be extended to the case of orthogonal nonconvex
polygonal intrusions and extrusions if, for each elaboration,
the markers on the faces perpendicular to the base of the elaboration
include additional metadata providing an ordering, say clockwise, of 
these faces.
\end{theorem}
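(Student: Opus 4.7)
The plan is to piggyback on the reconstruction procedure of Theorem \ref{genElab} and show that the only additional ambiguity introduced by allowing nonconvex polygonal bases for the extrusions/intrusions is exactly the cyclic order in which the side faces connect, which is precisely what the new metadata supplies. First I would parse the metadata as in Theorem \ref{genElab} to recover the hierarchy of elaborations: each elaboration's markers identify a base face of some previously reconstructed polyhedron, and among those markers we can distinguish (by normal direction) the markers lying on the new face parallel to the base (the top face of an extrusion or the bottom face of an intrusion) from those lying on the perpendicular side faces.

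Fix one elaboration and let $F$ be its base face with plane $\Pi$. The point-plane metadata on the parallel top/bottom face gives its supporting plane, which is parallel to $\Pi$, so the signed perpendicular distance between these planes recovers the height of the extrusion/intrusion; the sign is settled by whether the outward normal of the parallel face points away from or into the base polyhedron. For each perpendicular side face, its supporting plane intersects $\Pi$ in a line $\ell_i$ that contains one edge of the polygonal base of the elaboration. The cyclic ordering metadata lists these side faces as $f_1,f_2,\dots,f_k$ in the order they occur around the polygon boundary. I would then reconstruct the polygon by setting its $i$th vertex to be $\ell_i \cap \ell_{i+1}$ (indices mod $k$); the edge of the polygon along $\ell_i$ is the segment between vertex $i-1$ and vertex $i$. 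Because the scene is assumed to have been built from a genuine orthogonal polygonal elaboration, consecutive $\ell_i$ and $\ell_{i+1}$ are guaranteed to be non-parallel and to intersect at the actual polygon vertex, so this construction is well-defined and reproduces the base polygon exactly.

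With the base polygon and the signed height in hand, the full elaboration is determined, and we attach it to the base face $F$ of the polyhedron previously reconstructed by Theorem \ref{genElab}. Iterating through the elaboration hierarchy in the order given by the metadata finishes the reconstruction of the scene.

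The main obstacle is the uniqueness claim: without the ordering, the same set of supporting lines $\{\ell_i\}$ can be stitched together into several distinct simple nonconvex polygons (this is exactly the failure that forces the new metadata), so the proof really reduces to verifying that once a cyclic order is fixed, the polygon is determined. A minor subtlety I would address is that some side face might actually contribute two collinear edges of the polygon (if the polygon revisits the same supporting line); the ordering metadata lists each occurrence of the side face separately, and the marker positions on that face lie on the corresponding physical segments, letting us distinguish the two edges and associate each with its place in the cycle. Once these bookkeeping points are handled, the construction above gives a unique polygon, and the rest of the argument is identical to that of Theorem \ref{genElab}.
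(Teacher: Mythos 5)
Your proposal is correct and follows essentially the route the paper intends: the paper states this theorem without an explicit proof, but its proof of Theorem~\ref{genElab} already reduces each elaboration to recovering the base polygon from the perpendicular faces' supporting lines plus the depth from the parallel face, and your use of the cyclic ordering to stitch consecutive lines $\ell_i\cap\ell_{i+1}$ into the (possibly nonconvex) polygon is exactly the missing step the added metadata is meant to supply. One trivial nit: the theorem assumes only point-plane metadata, so the extrusion-versus-intrusion decision should be phrased as checking on which side of the base face the parallel face's plane lies relative to the base polyhedron, rather than via an outward normal.
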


A few remarks are in order. 
\begin{itemize}
\item Separate IDs are required for each elaboration in order
to handle cases in which the elaboration and markup resemble
that shown in Figure \ref{figure:ambiguousConvex}.
\item A given scene can often be constructed by different
series of elaborations. This nonuniqueness of the hierarchical 
elaboration supports different users' conceptualization of how a
scene can be built, but is also potentially a source of confusion
when a user considers how best to mark up a scene.
\item Because the different polyhedra each have their own ID, this
proof applies to scenes in which the different polyhedra intersect. 
The model can be rendered as intersecting polyhedra, but for efficiency
reasons it may be advisable to simplify the resulting model by
removing all regions internal to the final collection of polyhedra.
While this capability extends the class of scenes we can reconstruct,
it can add to the confusion of a user as to how best to markup a scene. 
One subcase of intersecting polyhedra is polyhedra that can be obtained
from simpler polyhedra by gluing along their boundaries.
\end{itemize}

Many real world objects and spaces are orthogonal elaborations of
convex polyhedra.
An even larger class of objects is well approximated by such polyhedra.
When the set of extrusions and intrusions in each face are derived
from a set of non-intersecting rectangles that are axis aligned for
some choice of two perpendicular axes, the separate elaborations do
not need different elaboration IDs.

\begin{theorem} 
\label{simpleExtrMulti}
Let $\cal P$ be a set of polyhedra obtained from a set of convex polyhedron
by intrusions and extrusions, where all of the intrusions 
and extrusions in each face are non-intersecting, orthogonal
rectangular intrusions. This set of polyhedra can be uniquely 
reconstructed from a marker-per-face strategy with point-normal metadata 
in which 
\begin{itemize}
\item all markers defining a single polyhedron have an associated
polyhedron ID, 
\item the markers specify the base faces of each polyhedron, and
\item all markers indicating an elaboration specify the base face in which 
the elaboration is done.
\end{itemize}
\end{theorem}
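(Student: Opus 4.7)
The plan is to give a constructive reconstruction algorithm and argue its correctness, peeling the markup apart layer by layer using the metadata structure guaranteed by the hypothesis.

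First, the polyhedron ID partitions the markers into disjoint groups, one per polyhedron, so I can reconstruct each polyhedron independently. Within a single polyhedron, the metadata further separates the base-face markers from the elaboration markers, with each elaboration marker tagged by the base face $F$ it elaborates. Point-normal data subsumes point-plane data, so by Theorem \ref{convex} the base-face markers alone uniquely reconstruct the underlying convex base polyhedron.

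Next, for each base face $F$ of this reconstructed base polyhedron, I would split the elaboration markers tagged to $F$ into cap markers (normal $\pm\hat{n}_F$, with sign indicating extrusion versus intrusion) and side markers (normal perpendicular to $\hat{n}_F$, hence along one of $\pm\hat{u}, \pm\hat{v}$, the two axes along which all rectangular elaborations on $F$ are aligned). Each elaboration is a rectangular box $R_i \times I_i$, where $R_i \subset F$ is a rectangle and $I_i$ is the signed height interval; its five new faces --- one cap and four sides --- each carry at least one marker by the marker-per-face assumption.

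The key step is then to group these markers by elaboration. Given a cap marker at position $p_c$ with signed height $h_c$, its elaboration is the unique rectangular box with $p_c$ in the interior of its cap, and the side markers of this box are exactly those whose positions lie on the boundary of the corresponding 3D slab over $F$. Because the rectangles $\{R_i\}$ are pairwise non-intersecting on $F$ and every new face carries at least one witness marker, the cap markers sort into rectangles and the side markers sort onto the correct edges. Once all elaborations on $F$ are recovered, the final polyhedron is assembled by gluing each extrusion box onto the base polyhedron and carving out each intrusion box.

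The main obstacle is formalizing the grouping step, particularly when several elaborations on $F$ share a cap height or a side-plane coordinate. The underlying combinatorial claim I would need is the following: given the marker-per-face constraint (every face of every elaboration has at least one marker and every marker is attributed to some face), the set of elaboration markers on $F$ uniquely determines both the family $\{R_i\}$ and the signed heights $\{h_i\}$. I would prove this by induction on the number of elaborations, at each step identifying an unambiguously recoverable one --- for instance, one whose left-side marker has globally minimal $u$-coordinate, so its left edge must be there --- and removing it to reduce to a smaller instance of the same problem.
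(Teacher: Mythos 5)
Your overall architecture matches the paper's proof: partition by polyhedron ID, reconstruct each convex base via Theorem \ref{convex}, then recover the elaborations face by face using the elaboration-to-base-face tags, and determine each elaboration's depth from the markers parallel to the base face. The difference is in how the per-face step is discharged. The paper projects all elaboration markers perpendicular to the base face $F$ onto the plane of $F$; the marker-per-face condition in 3D then yields a marker-per-edge, point-normal markup of the family of non-intersecting axis-aligned rectangles $\{R_i\}$, and the paper simply invokes its 2D result (Theorem \ref{orthConvex2D}) to recover $\{R_i\}$ uniquely. That 2D theorem is exactly the ``underlying combinatorial claim'' you flag as your main obstacle, and the induction you sketch --- peel off an unambiguously recoverable leftmost rectangle and recurse --- is precisely the paper's strategy (Lemma \ref{uniqueRect}). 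But that lemma is the single longest and most delicate argument in the appendix, requiring a careful case analysis on the relative positions of the top, bottom, and right witness markers of two hypothetical rectangles through the leftmost marker; your one-sentence gesture at it does not establish it, so as written the proof is incomplete at its critical step. The cleaner move is to reduce to the 2D theorem by projection rather than reprove it inline.

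A second, smaller gap: you seed the grouping from cap markers, assuming every elaboration contributes ``one cap and four sides.'' An intrusion that passes all the way through the polyhedron has no cap face at all, so your grouping would never discover it. The paper avoids this by recovering the rectangles from the projected side markers alone and only afterwards examining the infinite rectangular cylinder over each rectangle: if no markers of $S_F$ lie inside it, the elaboration is a through-intrusion; otherwise the interior markers all lie in one plane (by non-intersection of the rectangles) and that plane gives the depth. You would need the same fallback.
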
 

Even the class of real world objects and spaces that can be obtained
simply by orthogonal intrusions of a box is reasonably large.
In this case, a simple 
marker-per-face strategy with point-normal metadata suffices
without the need for additional metadata.

\begin{theorem} 
\label{oneFaceIntrusionsPN}
Let $P$ be a polyhedron obtained from a box by a set of 
non-intersecting rectangular intrusions made in the interior of one face and
aligned with the sides of this face. The polyhedron $P$ can be
uniquely reconstructed from 
a marker-per-face strategy with point-normal metadata.
\end{theorem}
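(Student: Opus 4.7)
The plan is to describe a reconstruction algorithm and argue its uniqueness. I assume, without loss of generality, that the intruded face has outward normal $+z$, so all marker normals are axis-aligned.

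I first recover the bounding box. Partitioning markers by normal direction, all $-z$-normal markers lie on the single box bottom, fixing $z_{\min}$. Among $+z$-normal markers, those with maximum $z$-coordinate lie on the top face and fix $z_{\max}$; the rest lie on intrusion bottoms. Since every intrusion lies in the interior of the top face, its walls have $x$- or $y$-coordinate strictly interior to the box; hence the extremal $x$-coordinate among $+x$-normal markers gives the outer $+x$-face of the box, and analogously for $-x$ and $\pm y$, while the remaining $\pm x, \pm y$-normal markers lie on intrusion walls.

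Next I recover the intrusions. The key geometric fact is a consequence of non-overlap: if two intrusions both contain $y_0$ in their $y$-ranges, then their $x$-ranges are disjoint (and symmetrically). So for a bottom marker $B = (x_0, y_0, z_b)$, the $x$-intervals of intrusions crossing the slice $y = y_0$ are disjoint and ordered, and the $x$-bounds of the intrusion containing $B$ are the two walls bracketing $x_0$ in this slice; the depth is read directly from $z_b$. I group wall markers into intrusions by jointly enforcing that (a) each bottom marker lies in the open rectangle of exactly one intrusion at its recorded depth, (b) each intrusion wall carries at least one marker with the appropriate outward normal, and (c) the rectangles are pairwise non-overlapping.

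The main obstacle is showing uniqueness of this joint assignment. Suppose two polyhedra $P_1, P_2$ of the stated form produce the same markers but differ, and pick a bottom marker $B = (x_0, y_0, z_b)$ whose containing intrusions, $I_1$ in $P_1$ and $I_2$ in $P_2$, disagree. After relabeling the disagreeing direction, assume the west-wall $x$-coordinates satisfy $a_1 < a_2$. Then the marker on $I_2$'s west wall, viewed in $P_1$, must lie on the west wall of some $P_1$-intrusion at $x = a_2 \in (a_1, b_1)$, which by non-overlap with $I_1$ must have its $y$-range disjoint from $I_1$'s $y$-range $[c_1, d_1]$; but the same marker also lies in $I_2$'s $y$-range (which contains $y_0 \in (c_1, d_1)$), and a short case analysis on which side of $[c_1, d_1]$ the marker's $y$-coordinate falls, combined with the symmetric argument applied to the south or north wall of $I_1$, yields a contradiction. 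Once all intrusion rectangles and depths are determined, $P$ is the bounding box with each open prism $(x_w^i, x_e^i) \times (y_s^i, y_n^i) \times (z_b^i, z_{\max}]$ removed.
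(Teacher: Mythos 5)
Your overall architecture matches the paper's: recover the bounding box from extremal coordinates of the axis-aligned markers, identify the intruded face, reduce the wall markers to a two-dimensional uniqueness problem about non-overlapping axis-aligned rectangles, and read depths off the parallel markers (declaring a through-intrusion when none exist). The gap is in the two-dimensional uniqueness argument, which is the mathematical heart of the theorem. You extract a contradiction after examining only one west-wall marker of $I_2$ (which lands on some intrusion $J$ of $P_1$) and one north- or south-wall marker of $I_1$ (which lands on some $K$ of $P_2$), but no contradiction need appear at that depth. Concretely, take $P_1$ with intrusions $I_1=[0,10]\times[0,10]$ (bottom marker $B$ at $(5,5)$) and $J=[2,10]\times[10.1,12]$, versus $P_2$ with $I_2=[2,10]\times[0,12]$ and $I_2'=[0,1.9]\times[0,10]$, all at the same depth. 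Here $a_1=0<a_2=2$; the marker on $I_2$'s west wall may sit at $(2,11)$, which lies legitimately on $J$'s west wall in $P_1$ with $y$-range disjoint from $[0,10]$, and the marker on $I_1$'s north wall may sit at $(1,10)$, which lies legitimately on $I_2'$'s north wall in $P_2$ with $x$-range disjoint from $[2,10]$. Both of your checks are satisfied simultaneously; the configuration is only ruled out one level deeper (the marker on $J$'s south wall at $y=10.1$ can lie on no south-facing wall of $P_2$, and $I_2'$'s east wall at $x=1.9$ carries no marker at all). In general the implications must be chased through a chain of auxiliary intrusions, and an argument anchored at an \emph{arbitrary} disagreeing bottom marker gives no guarantee that the chase terminates. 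This is precisely why the paper anchors its argument at the globally left-most, top-most left-pointing marker (Lemma \ref{uniqueRect}): extremality closes off the escape routes (``there are no markers below and to the left \dots''), and an induction on the number of rectangles (Theorem \ref{orthConvex2D}) then finishes the job.

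Two smaller issues. First, your uniqueness argument is anchored at a bottom marker, but an intrusion that passes all the way through the box has no bottom face and hence no bottom marker, so two reconstructions could in principle disagree only in how the leftover wall markers are grouped into through-rectangles; your argument is silent on this case, whereas the paper's reduction to Theorem \ref{orthConvex2D} covers it because that theorem uses only the wall markers. Second, ``without loss of generality the intruded face has normal $+z$'' needs a word of justification as a step of the reconstruction, not just of the analysis: the paper identifies the intruded face as the unique normal direction absent among the markers interior to the recovered box.
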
 

Theorem \ref{oneFaceIntrusionsPN} can be strengthened to only
need point-plane metadata, though the proof is more involved.

\begin{theorem} 
\label{oneFaceIntrusions}
Let $P$ be a polyhedron obtained from a box by a set of 
non-intersecting rectangular intrusions made in the interior of one face and
aligned with the sides of this face. The polyhedron $P$ can be
uniquely reconstructed from 
a marker-per-face strategy with point-plane metadata.
\end{theorem}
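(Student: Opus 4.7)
The plan is to reduce Theorem \ref{oneFaceIntrusions} to Theorem \ref{oneFaceIntrusionsPN} by showing that, for any polyhedron $P$ in this class, the point-plane data uniquely determines the outward-pointing normal at every marker. Once the normals are recovered, the point-normal result immediately yields unique reconstruction.

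First I would recover the axes and the bounding box from the plane data alone. Every face of $P$ is axis-aligned in the frame of the original box, so the plane normals appearing in the data take only three mutually perpendicular directions; grouping planes by normal direction identifies the axes, and the extreme plane at each end of each axis gives one of the six box planes. Markers on these six planes sit either on an unmodified box face or on a piece of the decomposed intruded face, and in either case the outward normal is the box-face normal pointing away from the box interior.

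Next I would identify the intruded face itself. In any valid reconstruction, each non-box (interior) plane plays one of two roles: intrusion bottom (parallel to the intruded face) or intrusion side wall (perpendicular to it). A marker on a side wall has its intruded-face-axis coordinate strictly between the corresponding intrusion's bottom offset and the intruded face plane's offset, while a marker on a bottom has that coordinate equal to the bottom plane's offset. Using this asymmetry, I would argue that there is a unique choice of intruded-face axis, and a unique direction along that axis (i.e., which of the two extreme planes is the intruded face), consistent with realizing the interior planes as a collection of non-intersecting, axis-aligned rectangular intrusions in the chosen face. This assignment pins down the outward normal of every remaining marker: an intrusion-bottom marker's outward normal points toward the intruded face, and each side-wall marker's outward normal points away from the bottom markers that lie on the same side of its plane.

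The main obstacle will be the uniqueness argument at this central step. The delicate cases are the ``flipped'' misinterpretations in which the intruded face is swapped with its opposite, or in which the intruded-face role is moved to a perpendicular axis so that the real bottoms become candidate side walls and the real side walls become candidate bottoms. Ruling these out requires showing that in any such alternative interpretation either some marker's position is incompatible with the candidate role of its plane (for example, a side-wall marker whose intruded-face-axis coordinate lies on the wrong side of the candidate bottom plane), or the resulting candidate intrusion footprints fail to form non-intersecting rectangles aligned with the sides of the candidate intruded face. Once this case analysis is complete, the recovered point-normal data lies in the hypotheses of Theorem \ref{oneFaceIntrusionsPN}, which finishes the proof.
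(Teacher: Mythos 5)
Your overall route is the same as the paper's: recover the bounding box from the extremal planes in each of the three normal directions, determine which box face received the intrusions, and then fall back on the point-normal machinery of Theorem \ref{oneFaceIntrusionsPN}. The genuine gap is that the one step that actually distinguishes this theorem from Theorem \ref{oneFaceIntrusionsPN} --- singling out the intruded face and its orientation from positions and planes alone --- is only stated as a goal (``I would argue that there is a unique choice\dots'', ``Ruling these out requires showing\dots''). You correctly identify the relevant asymmetry (a side-wall marker's depth coordinate lies strictly between the face plane and its intrusion's bottom plane, while a bottom marker's depth equals the bottom offset), but you never turn it into a decision procedure, and the flipped and rotated misinterpretations you flag as ``the main obstacle'' are precisely the content of the proof; leaving that case analysis as a to-do leaves the theorem unproved.

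The paper resolves this step with a short concrete test you could adopt: for each of the six box faces, take the interior markers parallel to that face and find the one closest to it; the intruded face is the one for which some perpendicular interior marker lies strictly closer to the face than this closest parallel marker (or for which there are no parallel interior markers at all, in which case every intrusion goes all the way through). This works because, for the true face, the side walls of the shallowest intrusion span the full depth from the face down to that intrusion's bottom and therefore carry markers nearer to the face than any bottom marker, whereas for the opposite face and for the four perpendicular faces every perpendicular interior marker is at least as far from the candidate face as the nearest parallel one. A further small weakness: to finish by invoking Theorem \ref{orthConvex2D} you need 2D normals for the projected footprint markers, and your rule that a side-wall normal ``points away from the bottom markers that lie on the same side of its plane'' is not well defined when several intrusions place bottom markers on both sides of a side-wall plane; this assignment needs a more careful statement (the paper is also terse on this point).
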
 

\subsection{Other sorts of markup.}
\label{sec:other}

As mentioned in Section \ref{sec:relWork}, researchers have looked at
polyhedral reconstruction from vertex data. We are less interested
in such reconstruction results because they require placement of
markers in hard to reach, or even hidden, places. Nevertheless we
state a few results.

O'Rourke gave an elegantly simple algorithm \cite{ORourke88} that reconstructs
orthogonal polygons from vertex data provided there are
no degenerate vertices connecting two collinear segments. This
condition together with orthogonality means that from every vertex
extends exactly one horizontal segment and exactly one vertical segment.
The reconstruction is then obtained using a straightforward 
{\it connect-the-dots} algorithm: Consider each row of vertices.
In each row, the first vertex must connect to the second, the third
to the fourth, and so on. Connecting the vertices of each column
in pairs in the analogous way, completes the argument. 
O'Rourke's algorithm can be extended to three-dimensions, again
with a prohibition against no $180^\circ$ vertices. This prohibition
is more stringent in the three-dimensional case in that it rules
out shapes we might like to cover such as one brick lying 
perpendicularly across another. 

\begin{theorem}
An orthogonal polyhedron in which all vertices do not connect any
two collinear segments can be uniquely and quickly reconstructed via
a connect-the-dots algorithm.
\end{theorem}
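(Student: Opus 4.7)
The plan is to mimic O'Rourke's 2D connect-the-dots argument axis by axis, and then recover faces from the edge set. For each of the three coordinate axes, partition the input vertex set by the two orthogonal coordinates: for the $x$-axis, group vertices sharing a common $(y,z)$ pair, so each group lies on an axis-parallel line $L$. Within each group, sort vertices by their varying coordinate and declare the edges to be the consecutive pairs (first to second, third to fourth, and so on). Doing this for all three axes produces a candidate edge set; faces are then obtained plane by plane, and the solid polyhedron is determined by its faces.

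The correctness argument has two parts. First, I would show that each vertex has exactly one incident edge along each of the three axes. The hypothesis forbids two collinear edges at a vertex, so at most one axis-parallel edge per axis is incident; and since every vertex of a (three-dimensional) polyhedron has at least three faces meeting at it, and in an orthogonal polyhedron these three faces are perpendicular to distinct axes, the vertex must be incident to at least one edge along each axis. Thus, on any axis-parallel line $L$ supporting at least one vertex, the true edges of the polyhedron lying on $L$ partition its vertices into pairs. Second, I would show that this forced pairing is exactly the consecutive one: if two edges on $L$ were not consecutive in the sort order, one would either strictly contain the other along $L$ or overlap it, contradicting the fact that interior points of a polyhedron edge cannot be vertices and that distinct boundary edges do not overlap.

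Once edges are fixed, face recovery is straightforward: each face of an orthogonal polyhedron lies on a plane perpendicular to some coordinate axis, so on every such plane I take the induced planar edge graph and extract its bounded cycles as face boundaries; the solid is then the unique three-dimensional region whose boundary is this collection of faces. Uniqueness of the whole reconstruction follows because at each step the algorithm has no freedom: the pairing within each axis-parallel line is forced, the face cycles on each plane are forced, and the enclosed solid is forced. For efficiency I would observe that the entire procedure reduces to bucketing vertices by two coordinates and sorting by the third, giving an $O(n \log n)$ runtime in the number of vertices, with the face-extraction step linear in the edge count.

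The main obstacle I anticipate is the pairing argument: one must rule out all non-consecutive pairings of the vertices on a line, which amounts to showing that collinear edges of an orthogonal polyhedron on a common line are pairwise disjoint and contain no other vertices in their interiors. The cleanest way is probably to argue via the boundary: segments of $L$ just inside the polyhedron and just outside alternate as one sweeps along $L$, and each transition occurs at a vertex with an incident axis-parallel edge, so consecutive vertices on $L$ are precisely the endpoints of a common edge.
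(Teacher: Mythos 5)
Your proposal follows essentially the same route as the paper, which simply extends O'Rourke's two-dimensional connect-the-dots algorithm to three dimensions: pair consecutive vertices along each axis-parallel line, justified by the fact that the no-collinear-segments condition plus orthogonality forces exactly one incident edge per axis at every vertex. In fact your write-up supplies more detail (the parity/alternation argument on each line and the face-recovery step) than the paper's own brief sketch.
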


Reconstruction of a much more general class of polyhedra is possible
from vertex data together with some additional metadata.

\begin{theorem}
Any polyhedral scene in which every face is convex can be reconstructed
from a vertex markup with metadata that includes the face ID of all faces
meeting at that vertex. 
\end{theorem}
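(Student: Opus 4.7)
The plan is to invert the metadata to recover each face individually, then assemble the faces into the polyhedral scene. For each face ID $F$ appearing anywhere in the metadata, I would first collect the set $V_F$ of all vertices whose metadata lists $F$. By completeness of the metadata (it records every face meeting the vertex), $V_F$ is exactly the set of polygon-vertices of $F$.

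Next, for each $F$, I would reconstruct $F$ as a convex polygon from $V_F$ alone. Since $F$ is planar I can fit its supporting plane from $V_F$ (which contains at least three noncollinear points, as a polygon has at least three vertices). Projecting into this plane and taking the 2D convex hull yields a convex polygon whose extreme points are the vertices of $F$ in their correct cyclic order along the boundary, so the edges of $F$ are determined as the consecutive pairs on the hull. This step uses the definitions in Section~\ref{basicDefs}: every vertex of a polygon is the endpoint of two edges, and for a convex polygon the set of polygon-vertices coincides with the set of extreme points of its convex hull.

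Finally, I would assemble the polyhedral scene as the union of the bounded regions enclosed by the reconstructed faces. The polyhedron definition forces each edge to be shared by exactly one other face and forces faces to intersect only along shared edges or vertices; since we have reconstructed every face and its boundary cycle, the closed surface so produced determines the polyhedra it bounds. Connected components of this surface correspond to distinct polyhedra in the scene.

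The main obstacle is verifying that in step two we truly recover $F$ and not some larger or smaller polygon. The concern would be a vertex of $V_F$ lying strictly inside the convex hull of $V_F$, which would happen only if a polygon-vertex of $F$ failed to be an extreme point; this is ruled out by convexity together with the definition of polygon vertex. The dual concern is a vertex of some other face lying on the interior of $F$ yet not listed with $F$; this is ruled out by the polyhedron definition, which requires two faces to intersect only at edges or vertices they share. These two observations together make $V_F$ exactly the extreme-point set of the face, which is what the convex-hull step needs.
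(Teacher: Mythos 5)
Your proposal is correct and takes essentially the same approach as the paper, whose entire proof is that each face is recovered as the convex hull of the vertices carrying that face's ID; your additional discussion of plane fitting, edge ordering, and why no vertex of $V_F$ can be a non-extreme point simply fills in details the paper leaves implicit.
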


\begin{proof}
Each face can be reconstructed by taking the convex hull of all of
the vertices that are marked as associated with that face.
\end{proof}

General polyhedra can be reconstructed if the metadata also includes
ordering information.

\begin{theorem}
Any polyhedral scene can be reconstructed from vertex markup if
the metadata includes not only the face ID but also ordering information
that specifies the clockwise order in which the vertices would be
traversed when walking around the boundary of the face.
\end{theorem}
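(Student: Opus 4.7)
The plan is to reconstruct the polyhedron face by face, using the face IDs to partition the vertex set and the cyclic ordering metadata to determine the boundary polygon of each face, and then to verify that the resulting collection of faces glues together into a valid polyhedron as defined in Section \ref{basicDefs}.

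Concretely, I would first group the vertices by face ID: for each distinct face ID $\phi$ appearing in the metadata, let $V_\phi$ be the set of marked vertices carrying label $\phi$. For each such $V_\phi$, the clockwise ordering metadata specifies a cyclic permutation of $V_\phi$; reading off the vertices in this cyclic order yields a sequence $v_1^\phi, v_2^\phi, \dots, v_{n_\phi}^\phi$. I would then declare the boundary of face $\phi$ to consist of the edges $v_i^\phi v_{i+1}^\phi$ (indices mod $n_\phi$). Since the data purports to come from a face of a real polyhedron, these vertices are coplanar and the resulting closed polygonal curve bounds a well-defined planar region (possibly non-convex, possibly of higher genus if additional face IDs correspond to hole boundaries, though standard convention is that a face is simply connected, in which case each face ID yields a single simple polygon). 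The use of explicit cyclic ordering is exactly what is needed to disambiguate the non-convex case, where the convex-hull trick from the preceding theorem fails.

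Next, I would show that the union of these reconstructed faces coincides with the boundary of the original polyhedron. Each vertex of the original polyhedron carries, in its metadata, a face ID for every face incident to it, so every face of the original model is reconstructed exactly, and no spurious faces are introduced (since each face ID yields one face). The shared-edge structure is automatically consistent: an edge of the original polyhedron is incident to exactly two faces, so its two endpoints appear consecutively in the cyclic order of both face IDs, and the two reconstructed face polygons agree on that edge. By the definition of polyhedron, the union of these faces is a closed, connected two-manifold bounding a unique three-dimensional region, which is the desired reconstruction.

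The main obstacle is a bookkeeping one rather than a conceptual one: carefully verifying that the cyclic orderings on all faces fit together coherently so that the reconstructed faces satisfy the incidence conditions of the polyhedron definition (each edge shared by exactly two faces, faces around a vertex cyclically ordered, intersections of faces only at shared edges or vertices). This follows because the metadata is assumed to be generated from an actual polyhedron, so the orderings are restrictions of the true boundary walks, but one should note that the ``clockwise'' convention must be interpreted consistently -- for instance, always as viewed from outside the polyhedron -- so that the induced edge orientations on any shared edge are opposite, guaranteeing a consistent orientation of the reconstructed surface. Once this convention is fixed, the reconstruction is uniquely determined and faithful, completing the proof.
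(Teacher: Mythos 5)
Your proposal is correct and takes essentially the same approach as the paper: the paper's proof is simply that the boundary of each face is reconstructed by placing a segment between each vertex and the next one in the given ordering, which is exactly your core construction. The additional verification you supply (coplanarity, consistent gluing along shared edges, orientation conventions) elaborates on details the paper leaves implicit but does not change the argument.
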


\begin{proof}
The boundary of each face can be reconstructed by placing a segment
between each vertex and the next one in the ordering. 
\end{proof}

%

\section{Future Work.} 
\label{sec:future}

We have explored a number of markup strategies with various
types of metadata in search of markup systems that are 
are easy for users to carry out and that unambiguously 
reconstruct an attractive class of models.
We are in the process of improving our system on the basis of these
results. We look forward to working with artists to test our system, 
and to learn which strategies are most intuitive and effective for such users.

A number of open research questions remain. We did not
consider robustness issues. When the marker pose is obtained from
camera estimates, there are interesting questions as to the best markup
strategies in the face of inaccurate estimates or failing to detect
one or more markers altogether. Robustness in light of user errors
in entering metadata is also interesting. A number
of our results are not sharp, such as Theorem \ref{genElab}, in that
polyhedra not covered by the theorem can be reconstructed
from the marker strategy and metadata. The question is whether
the statement can be extended to a well-defined class of polyhedra.
Similarly, in some cases, polyhedral scenes could be reconstructed
with less metadata. 
With care, some classes of polyhedra obtained by 
extensions from polygons with boundaries intersecting the boundary
of the face could be covered. 
Consideration of non-orthogonal extrusions may be fruitful. 
We reconstructed a few non-polyhedral shapes such as cylinders.
Extending these results to more general parametrized shapes would yield 
a much richer class of scenes that can be reconstructed.
We primarily explored a marker-per-face strategy leaving open 
the question of whether a few markers placed more strategically would
enable unambiguous reconstruction of a more general class of polyhedra.
In general, we continue to 
search for practical markup strategies that yield unambiguous
reconstruction of scenes from a sparse set of markers. 

\bibliographystyle{abbrv}
\bibliography{copyOfPantheia}





\appendix

\section{Proofs for results in Section \ref{sec:results}.}
\label{proofs}

This appendix contains proofs of results in Section \ref{sec:results}.
It begins with proofs of Theorem \ref{convex} and Theorem \ref{dense}.

To prove Theorem \ref{convex}, we make use of the following Lemma
and Corollary.

\begin{lemma}
\label{oneSide}
A convex polyhedron $P$ lies entirely on one side of the plane defined by
any of its faces.
\end{lemma}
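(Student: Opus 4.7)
The plan is to argue by contradiction, exploiting the convexity of $P$ together with the fact that any face $f$ lies on the boundary of $P$. Let $\pi$ be the plane of the face $f$, and suppose for contradiction that $P$ contains a point $a$ strictly on one side of $\pi$ and a point $b$ strictly on the other. The goal is to derive from this hypothesis that some point on $f$ must actually be an interior point of $P$, which contradicts $f \subset \partial P$.

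To carry this out, first pick a point $p$ in the relative interior of the face $f$ (that is, a point of $f$ not on any of its edges). Because $f$ is a two-dimensional polygon, there is some $\epsilon > 0$ such that the open planar disk $D = B(p,\epsilon) \cap \pi$ is entirely contained in $f$, and hence in $P$. Now consider the set $C = \mathrm{conv}(D \cup \{a, b\})$. By convexity of $P$, we have $C \subseteq P$. But $C$ contains a small three-dimensional ball around $p$: the two line segments from $p$ to $a$ and from $p$ to $b$ meet $\pi$ transversely from opposite sides, and together with the two-dimensional disk $D$ they sweep out an open neighborhood of $p$ in $\mathbb{R}^3$. Hence $p$ lies in the interior of $P$, contradicting the fact that $p \in f$ is a boundary point of $P$.

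The main obstacle, such as it is, is the purely topological step of verifying that the convex hull of a planar disk around $p$ together with one point strictly above and one point strictly below $\pi$ genuinely contains a three-dimensional neighborhood of $p$; this is intuitive but deserves a careful argument via an explicit coordinate description or a double-cone construction. Once that is in hand, the conclusion that $p$ is simultaneously interior to $P$ and on the boundary of $P$ delivers the contradiction, establishing that $P$ lies in one of the two closed half-spaces determined by $\pi$.
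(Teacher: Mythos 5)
Your proof is correct, and it shares the paper's overall shape---a proof by contradiction that uses convexity to place a set crossing the face's plane inside $P$---but the contradiction is landed differently. The paper picks a point $p$ on the face $F$ and asserts that ``the face bounds the polyhedron on one side,'' i.e.\ that some $\epsilon$-neighborhood of $p$ in one open half-space contains no points of $P$; a single segment from $p$ to a point of $P$ in that half-space then violates this. That local one-sidedness claim is itself essentially a local version of the lemma and is stated without justification. You instead use only the definitionally immediate fact that $p\in F\subseteq\partial P$, and derive the contradiction by showing $p$ would have to be an \emph{interior} point: the convex hull of a planar disk in $F$ around $p$ together with points of $P$ strictly on both sides of the plane is a double cone containing a three-dimensional neighborhood of $p$. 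This costs you the (routine but nontrivial) verification that the double cone really contains such a neighborhood---which you correctly flag as the step needing a coordinate argument---but it buys a proof resting on a weaker and more clearly available hypothesis about the face. Both arguments are sound; yours is arguably the more self-contained of the two.
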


\begin{proof}
By contradiction. 
Suppose there are points of the polyhedron $P$ on both sides of the 
plane $R$ defined by a face $F$. We show that
such a polyhedron cannot be convex.
Let $p$ be a point on face $F$. The face bounds the polyhedron on
one side, so there exists an $\epsilon > 0$ such that there are no
points within $\epsilon$ of $p$ in one of the open half-spaces $H$ 
defined by $R$.  
Since by hypothesis the polyhedron contains points on both sides
of the plane $R$, let $p_1$ be a point in the polyhedron contained in
the half-space $H$. Since the polyhedron is convex by hypothesis, 
the entire line segment between $p$ and $p_1$ must be contained
in the polyhedron. This line segment contains points within $\epsilon$
of $p$. We have reached a contradiction, thus our supposition that
there are points of $P$ on both sides of $F$ must be false.
\end{proof}

\begin{theorem}[Corollary to Lemma \ref{oneSide}]
\label{faceIntersections}
In a convex polyhedron, 
the intersection of any plane $R$ defined by a face $F$ of a convex polyhedron
with any other face $F'$ of the polyhedron must be contained entirely within an
edge of $F'$.
\end{theorem}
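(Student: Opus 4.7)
The plan is to reduce the corollary to the elementary fact that a supporting line of a convex polygon meets the polygon in either a single vertex or a full edge. By Lemma~\ref{oneSide} applied to $P$, the whole polyhedron lies in a single closed half-space $H$ bounded by $R$, and in particular $F' \subseteq H$. Hence every point of $R \cap F'$ lies on the boundary plane $R$ of $H$.

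First I would dispose of the degenerate case in which $F'$ is coplanar with $F$. The polyhedron definition used here forces two distinct faces to meet only along shared edges or vertices, so two coplanar faces share at most a common edge, and the conclusion is immediate.

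Assuming then that $F'$ is not coplanar with $F$, let $R'$ be the plane containing $F'$. Then $R \cap R'$ is a line $\ell$, and $R \cap F' = \ell \cap F'$. Because $F'$ lies in the closed half-space $H$ bounded by $R$, the polygon $F'$ lies in one of the two closed half-planes of $R'$ cut off by $\ell$; this is exactly the statement that $\ell$ is a supporting line of the convex polygon $F'$.

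The final step is to invoke the standard fact that a supporting line of a convex polygon meets the polygon either in a single vertex or in a single full edge. In either case, $R \cap F' = \ell \cap F'$ is contained in some edge of $F'$ (a vertex being an endpoint of its adjacent edges), which completes the argument. The main obstacle, such as it is, lies in being careful about the degenerate coplanar case and the possibility $\ell \cap F' = \emptyset$ (vacuously contained in any edge); the supporting-line claim itself is a routine consequence of the convexity of $F'$ together with the fact that its boundary is a union of finitely many edges.
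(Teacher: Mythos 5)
Your argument follows essentially the same route as the paper's: both rest entirely on Lemma~\ref{oneSide}, which places $F'$ in a single closed half-space bounded by $R$, and then conclude that $R\cap F'$ must lie in the boundary of $F'$; you have simply spelled out, via the supporting-line fact for convex polygons, the step that the paper compresses into a single sentence. The one point to flag is your coplanar case: if $F'$ lay in $R$, then $R\cap F'$ would be all of $F'$, which is not contained in any edge of $F'$, and the face-intersection clause of the polyhedron definition constrains $F\cap F'$, not $R\cap F'$, so the conclusion there is not ``immediate'' --- the honest statement is that distinct coplanar faces do not arise in the intended setting, a tacit assumption the paper's own one-line proof also makes. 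For the main (non-coplanar) case your reduction to a supporting line of the convex face is sound and matches the paper's reasoning.
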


\begin{proof}
If the plane $R$ intersected the face $F'$ anywhere other than in
an edge of $F'$, the face $F'$ would contain points on both
sides of the plane $R$, but that is impossible by Lemma \ref{oneSide}.
\end{proof}

\begin{proof}[Proof of Theorem \ref{convex}, Convex Polyhedron]
For each marker, determine the half-space which contains all 
the other markers, as per Lemma \ref{oneSide}.
Take the intersection of all these half-spaces. 
The intersection of convex sets is convex, and half-spaces
are convex, so this intersection is convex. 

We want to prove that this convex polyhedron $P_I$ is the same
as the original one, $P_0$, that was marked. To do so, it suffices to 
show that no other convex polyhedron has faces with the same set 
of bounding planes. By Lemma \ref{oneSide}, any convex polyhedron
lies entirely on one side of the plane defined by any of its 
faces. By this Lemma, any other convex polyhedron defined
by these faces must be contained in $P_I$. To prove the converse,
suppose $P_0$ were strictly contained in $P_I$. In this case its face set 
must be different. Because the face planes of $P_I$ and $P_0$
are identical, and $P_0\subset P_I$, a face $F_0$ of $P_0$ must be
strictly contained a face $F_I$ of $P_I$. Let $E_0$ be an edge of
$F_0$ that is not an edge of $F_I$; in particular, $E_0$ is a line
segment in the interior of $F_I$. The edge $E_0$
must arise as the intersection of two faces of $P_0$, so must 
be contained in the intersection of two face planes $R_I$ and $R_J$
of $P_0$, one for the face $F_I$ and one for another face, $F_J$, 
of $P_0$. The two planes $R_I$ and $R_J$ are also face planes for
$P_I$. By construction $R_J$ intersects $F_I$ in its interior, but
that contradicts Corollary \ref{faceIntersections}. Thus $P_0$ 
and $P_I$ must be the same polyhedron.
\end{proof}

We now turn to proving Theorem \ref{dense}.
\begin{proof}[Proof of Theorem \ref{dense}, Dense Markup.]
The markers define a finite number of planes. Partition each plane
into regions along lines of intersection with all other planes.
Remove from consideration all infinite regions. Each of the remaining
regions contains a disk of positive radius for some radius $\epsilon$.
Take the minimum $\epsilon_{min}$ of these positive radii over all of 
the regions. Since the set of regions is a finite set, this minimum 
exists and is positive. Unambiguous reconstruction of the scene can
be done from any marker strategy that places at least one
marker within distance $d < \epsilon_{min}$ of every point on every
face in the polyhedral scene. The reconstruction algorithm simply keeps
every region in which there is a marker and discards the rest.
\end{proof}

\subsubsection{Proofs for Elaborations on a Convex Polyhedron.}

This section contains proofs for the results stated in Section
\ref{sec:elabResults}.

\begin{proof}[Proof of Theorem \ref{genElab}.]
The marker metadata specified in Theorem \ref{genElab} specified the
entire hierarchy of elaborations by specifying to which face each
elaboration is made.
The marker metadata tells us which markers lie on faces belonging to 
the set of initial polyhedra on which the hierarchical construction is made. 
Start by identifying these markers. Because each initial polyhedron has
its own ID specified by the marker metadata, we can
reconstruct the initial set of polyhedra by Theorem \ref{convex}.

For each face on this initial set of polyhedra, identify any elaborations
made to this face as indicated by the marker metadata. Each elaboration
has its own ID, so to construct the elaboration, identify all markers
with that ID that are perpendicular to the face. Projecting the markers
onto the face defines a polygon. If there are any markers with this
elaboration ID parallel to the face, they determine the depth of 
the extrusion or intrusion. If no such marker exists, the elaboration
must be an intrusion and must go all the way through the polyhedron.
Repeat this process for all faces in the initial set of polyhedra.
The order in which each face is considered is irrelevant. This
process is then repeated for the next level of the hierarchy: each 
face just created is considered in turn and any elaborations to
those faces are determined. This process is continues for all levels
in the elaboration hierarchy until there are all elaborations have
been incorporated. 
\end{proof}


We now prove some two-dimensional results to support the proofs of 
Theorem \ref{oneFaceIntrusionsPN} and Theorem  \ref{oneFaceIntrusions}.
We first prove a uniqueness results and then
give an algorithm that performs the reconstruction.
The proofs and algorithms make use of the following concepts.


\begin{Definition} {\rm A set of polygons is {\em consistent} with a set of 
markers if every marker is on (and aligned with) an edge of one
of the polygons.} 
\end{Definition}

\begin{Definition} {\rm A consistent set of polygons is {\em fully consistent} 
with a set of markers if in addition every edge in the set of polygons 
has at least one marker on it (and aligned with it).}
\end{Definition}

\begin{Definition} {\rm A marker is considered {\em in the interior} of a 
polygon if its center is in the interior of the polygon, or it 
is on an edge of the polygon but not aligned with it. Markers on 
an edge, and aligned with it, are not considered in the interior.}
\end{Definition}


Consider the case of orthogonal non-intersecting
convex polygons; in other
words, we are considering a set of axis-aligned rectangles. 
Suppose this collection of rectangles is marked up using a 
marker-per-face strategy with point normal data.
The markers can be divided into four classes depending on the 
direction of their outward facing normal, 
$\cal L$, $\cal R$, $\cal T$, and $\cal B$, 
where the normals for the markers in $\cal L$ point to the left, 
to the right in $\cal R$, to the top in $\cal T$, to the bottom in $\cal B$. 

The rest of this section shows that a set of non-intersecting orthogonal
rectangles can be uniquely reconstructed from any marker-per-face strategy
with point-normal data. The idea behind the proof is that, in such a markup of
an arbitrary collection of axis aligned rectangles,
if we can find one rectangle $P$ that belongs to a set 
of rectangles that is fully consistent with the marker set, and if we
can show that every fully consistent set of rectangles contains this rectangle,
by induction there is only one set of rectangles fully consistent with the 
markers. We begin with a lemma that finds such a rectangle $P$.

\begin{lemma}
\label{uniqueRect}
Let $\cal M$ be an arbitrary set of markers placed according to a 
marker-per-face strategy with associated point-normal data on an arbitrary
set $S$ of non-intersecting, orthogonal rectangles. There is a unique 
rectangle $P$ consistent with the left-most (and top-most among these 
if there is a tie) marker $M_L\in\cal L$ 
that is contained in all sets of rectangles that are fully consistent 
with this set of markers. Moreover, the rectangle $P$ is the only 
rectangle consistent with marker $M_L$ that has markers on and 
aligned with each of its edges and no markers in its interior.
\end{lemma}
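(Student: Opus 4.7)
The plan is to let $P_0 \in S$ be the rectangle in $S$ whose left edge carries $M_L$; this exists because $M_L \in \cal L$ was placed on a left-facing edge by the marker-per-face strategy. I would prove (a) that $P_0$ satisfies the three-part characterization stated in the lemma, and (b) that any rectangle satisfying the characterization equals $P_0$; the ``contained in every fully consistent set'' claim then follows by applying (b) to the rectangle of any fully consistent set whose left edge carries $M_L$. For (a), consistency with $M_L$ is by construction, and the ``markers on and aligned with each edge'' condition is exactly the marker-per-face hypothesis applied to $P_0$. For the no-interior-markers property, any marker classified as in the interior of $P_0$ would have to come from a rectangle $Q \in S$ whose open interior meets that of $P_0$ near the marker, and non-intersection of $S$ then forces $Q = P_0$, contradicting the classification.

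For (b), suppose $P'$ is any rectangle satisfying (i)--(iii). Since $M_L$ is the leftmost $\cal L$-marker, $P'$ must also have left edge at $x = x_{M_L}$, so it can differ from $P_0$ only in its top, bottom, or right extent. The core case is $y_t^{P'} > y_t^{P_0}$, which I would dispose of by a two-marker chain: first, $P_0$'s $\cal T$-marker would land in the interior of $P'$ unless $P'$'s right extent is strictly smaller, forcing $x_r^{P'} < x_r^{P_0}$; second, $P'$'s own $\cal T$-marker comes from some $Q_T \in S$ with top edge at $y_t^{P'} > y_t^{P_0}$, and the leftmost-topmost rule applied to $Q_T$'s $\cal L$-marker rules out $x_l^{Q_T} = x_{M_L}$ (otherwise $Q_T$ would collide with $P_0$), while non-intersection of $S$ combined with the resulting overlap of the $x$-ranges of $Q_T$ and $P_0$ forces $y_b^{Q_T} > y_t^{P_0}$; the $\cal L$-marker of $Q_T$ then lies strictly inside $P'$, a contradiction. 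Symmetric arguments handle $y_t^{P'} < y_t^{P_0}$ and the two cases for the bottom edge. With top and bottom agreeing, a direct check comparing the $\cal R$-markers of $P_0$ and $P'$ forces the right extents to agree, giving $P' = P_0$.

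The hard part will be the chain in (b): no single marker immediately kills an alternative $P'$, because the $\cal T$-marker of $P_0$ can fall outside $P'$ in $x$, and $P'$'s own top-edge marker is a priori unrelated to $P_0$. The argument must chain through two markers---$P_0$'s top marker first, to squeeze $P'$ narrow, and then $P'$'s top marker, to locate a separate rectangle $Q_T$ strictly above $P_0$---before the $\cal L$-marker of $Q_T$ can be exhibited in $P'$'s interior. Corner degeneracies, such as $M_L$ landing at a corner of $P_0$ or a marker landing at a corner of $P'$, are cleanest to handle by assuming markers are placed in the relative interior of each face, or by treating them as small sub-cases within the main analysis.
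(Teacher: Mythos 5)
Your plan is sound in outline but takes a genuinely different route from the paper, and it is worth being clear about both the difference and the gap. The paper's proof never isolates the ``moreover'' characterization: it assumes from the outset that $P$ and $P'$ are \emph{both} members of fully consistent sets of non-intersecting rectangles, defines the extremal markers $M_T, M_B, M_R$ on each, and derives a contradiction partly by exploiting the partner rectangles that the second consistent family $S'$ must supply for stranded markers (``$M_R$ cannot be part of any rectangle in a set consistent with $P'$\dots''). You instead anchor everything to the true set $S$: you take $P_0\in S$, verify the three-part characterization for it, and try to show any rectangle satisfying the characterization equals $P_0$ by chasing markers of genuine rectangles of $S$ into the interior of $P'$. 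If completed, your argument proves strictly more than the paper's does -- it actually delivers the ``moreover'' clause, which the reconstruction algorithm in the appendix relies on but which the paper's proof, as written, does not establish. Your worked case ($y_t^{P'}>y_t^{P_0}$) is correct: the squeeze $x_r^{P'}<x_r^{P_0}$, the exclusion $x_l^{Q_T}>x_{M_L}$ via the leftmost--topmost rule, the forcing of $Q_T$ above $y_t^{P_0}$ by non-intersection, and the trapping of $Q_T$'s $\cal L$-marker in the strip $[y_t^{P_0},y_t^{P'}]\times(x_{M_L},x_r^{P'}]$ inside $P'$ all check out (given the paper's convention that a marker on an edge but not aligned with it counts as interior).

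The genuine gap is the claim that the remaining cases follow by ``symmetric arguments.'' They do not, because the roles of $P_0$ and $P'$ are not interchangeable: $P_0$ belongs to the non-intersecting family $S$, while $P'$ is an arbitrary marker-decorated rectangle with no family behind it. Concretely, in the case $y_t^{P'}<y_t^{P_0}$, the $\cal T$-marker on $P'$'s top edge sits at a height interior to $P_0$'s vertical span, so (since $P_0$ has no interior markers) it must lie at $x>x_r^{P_0}$, which forces $x_r^{P'}>x_r^{P_0}$; the rectangle $Q\in S$ carrying that marker is then pushed to the \emph{right} of $P_0$, not above it, and its $\cal L$-marker can legitimately lie below $y_b^{P'}$, so the final ``marker in the interior of $P'$'' contradiction does not materialize the way it did in your worked case. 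One must instead continue the chain -- e.g., $P_0$'s $\cal R$-marker at $x_r^{P_0}$ is strictly inside $P'$'s horizontal span, so it must escape above $y_t^{P'}$ or below $y_b^{P'}$, and each branch requires a further marker to be located and trapped. The bottom-edge and right-edge cases have the same asymmetry. This is the same flavor of incompleteness as the paper's own ``similar arguments show\dots,'' but in your setting the case analysis is heavier precisely because $P'$ carries less structure; you should either work these cases out explicitly or add the hypothesis that $P'$ belongs to a fully consistent set, which would recover the paper's (weaker) statement.
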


\begin{proof}
Suppose $M_L$ lies on a rectangle $P$ that is a member of a fully 
consistent set of non-intersecting rectangles $S$ with respect to 
$\cal M$, and that $M_L$ also lies on a rectangle $P'$ that is a 
member of another fully consistent set of non-intersecting rectangles $S'$. 
Let $M_R$ be the topmost marker in $\cal R$ on the 
right side of of $P$, and $M_T$ and $M_B$ be the left-most markers 
in $\cal T$ and $\cal B$ on the top and bottom, respectively, of $P$. 
(See Figure \ref{figure::orthoProof}.) Similarly, let $M_R'$ be 
the topmost marker on the right side of of $P'$, and $M_T'$ and $M_B'$ 
be the left-most markers on the top and bottom, respectively, of $P'$. 
Given the way we chose $M_L$, it must be the top-most marker on the left 
side of both $P$ and $P'$. Furthermore, without loss of generality, 
we may assume that marker $M_T'$ is either to the right of $M_T$, 
or $M_T'$ and $M_T$ have the same horizontal component. (Otherwise,
switch $P$ and $P'$.) The marker $M_T'$ must not be higher than $M_T$
since otherwise $M_T$ would be in $P'$, which is impossible
since $P'$ is part of a consistent set of non-intersecting rectangles. 
Unless $M_T$ and $M_T'$ are 
the same marker, they cannot be equal in height, because then $M_T$ 
would be a top marker on $P'$ to the left of $M_T'$. We are left 
with two cases to consider: either $M_T'$ is to the right and 
lower than $M_T$, or $M_T = M_T'$. 

Consider the case in which $M_T'$ is to the right 
and lower than $M_T$. The marker $M_T'$ must be to the right of 
$M_R$, since $M_T'$ cannot be in $P$. And, by definition,
$M_T'$ is above $M_L$. There are two cases: 
either $M_R$ is above $M_L$ or $M_R$ is below $M_L$, as illustrated
in Figure \ref{figure::orthoProof}. 
(If $M_L$ and $M_R$ were equal in height, the $P'$ would
contain $M_R$.) Suppose $M_R$ is above $M_L$. 
Then $P'$ must be below $M_R$. But now $M_R$ cannot be part of 
any rectangle in a set consistent with $P'$; there are no markers 
that could form the bottom of such a rectangle because there are no 
markers in $\cal B$ below and to the left of $M_R$ and above $P'$ 
since $P$ contains no markers in its interior.
Similarly, suppose $M_R$ is below $M_L$. Then $P'$ must be above $M_R$. 
But then $M_R$ cannot be part of a rectangle in a set consistent with
$P'$ because there are no markers in $\cal T$ below $P'$ and 
above and to the left of $M_R$. So the only alternative left open
to us is that $M_T'$ and $M_T$ are the same marker. Similar arguments
show that $M_B$ and $M_B'$ must be the same
marker, and that $M_R$ and $M_R'$ must be the same marker.
Thus $P$ is unique.
\end{proof}

\begin{figure}
\centering
\includegraphics[width=4.0in,height=3.0in]{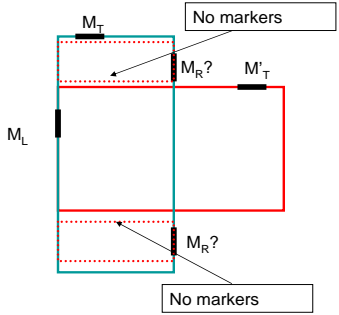}
\caption{\label{figure::orthoProof} Diagram supporting the proof
of Lemma \ref{uniqueRect}. }
\end{figure}

\begin{theorem}
\label{orthConvex2D}
A set $\cal R$ of non-intersecting axis-aligned rectangles
can be uniquely reconstructed from any set of markers $\cal M$ 
placed according to a marker-per-face strategy with point-normal data.
\end{theorem}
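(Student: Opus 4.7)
The plan is to bootstrap Lemma \ref{uniqueRect} into a full reconstruction via induction on the number of rectangles. The base case is trivial: an empty marker set corresponds to the empty rectangle set. For the inductive step, given a nonempty marker set $\cal M$ arising from a marker-per-face markup of some set $S$ of non-intersecting axis-aligned rectangles, I would apply Lemma \ref{uniqueRect} to the leftmost-topmost marker in $\cal L$ to pin down a distinguished rectangle $P$. The lemma guarantees two things I rely on: $P$ is uniquely determined by the markers alone (as the unique rectangle consistent with $M_L$ that has aligned markers on each edge and no interior markers), and $P$ must appear in every fully consistent reconstruction of $\cal M$.

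The next step is to peel off $P$ together with its markers and recurse. Let $\cal M_P \subseteq \cal M$ be the set of markers lying on and aligned with the edges of $P$. Because the rectangles in $S$ are pairwise non-intersecting and markers are oriented by their outward normals, no marker in $\cal M_P$ can simultaneously serve as an aligned edge-marker of any other rectangle in a fully consistent reconstruction. Therefore $\cal M' := \cal M \setminus \cal M_P$ is precisely the marker-per-face point-normal markup of $S \setminus \{P\}$, which itself is a set of non-intersecting axis-aligned rectangles with strictly fewer elements. By the induction hypothesis, $\cal M'$ uniquely reconstructs $S \setminus \{P\}$, and adjoining $P$ yields the unique reconstruction of $S$.

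The recursion also doubles as a reconstruction algorithm: repeatedly locate the leftmost-topmost marker of $\cal L$, invoke Lemma \ref{uniqueRect} to read off the induced rectangle, delete its edge-aligned markers, and iterate until the marker set is exhausted. Because each step removes at least the four markers $M_L, M_R, M_T, M_B$, the procedure terminates.

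The main obstacle is the bookkeeping in the inductive step, specifically the claim that $\cal M_P$ is unambiguously identifiable from $\cal M$ alone and that stripping it leaves a valid marker-per-face markup of $S \setminus \{P\}$. Since $P$ is determined by the markers via Lemma \ref{uniqueRect}, its edge set is determined, and a marker belongs to $\cal M_P$ iff it lies on and is aligned with one of those four edges — a purely local test. Non-intersection of the rectangles then ensures no rectangle in $S \setminus \{P\}$ loses any of its markers when $\cal M_P$ is removed, so the induction hypothesis genuinely applies. Once these routine verifications are in place, the result follows.
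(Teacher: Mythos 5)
Your proposal is correct and follows essentially the same route as the paper: induction on the number of rectangles, using Lemma \ref{uniqueRect} to identify a rectangle $P$ that every fully consistent set must contain, then deleting the markers consistent with $P$ and recursing. The only differences are cosmetic (you start the induction at the empty set rather than at $n=1$ with the bounding rectangle, and you spell out the bookkeeping that the paper leaves implicit).
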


\begin{proof}
We proceed by induction 
on the number of rectangles (which we do not assume is known). 

Let $O$ be the bounding orthogonal rectangle, the smallest orthogonal
rectangle that contains all the markers. Each edge of $O$ has at least
one marker on it, and aligned with it, because by definition of $O$
each edge of $O$ must intersect the set of rectangles, and orthogonality
guarantees that this intersection must be in an edge, not just a point; 
we can find $O$ from the markers by finding all markers that define a 
line which is a boundary of the set -  
a line is on the boundary if all markers
are in only one of the half-planes defined by the line - 
and letting $O$ be the intersection of all of these half planes. 

{\bf Case $n = 1$:} If there is only one rectangle, 
$O$ will be that rectangle. We know we are in this case if all
of the markers are on and aligned with an edge of $O$.

{\bf Case $n = k$:} By Lemma \ref{uniqueRect}, any fully
consistent set of rectangles must contain the rectangle $P$. 
Now exclude from $\cal M$ any markers that are consistent with $P$. 
The remaining markers must be consistent with a set of $k-1$ rectangles. By
induction, there is only one such set. 

Thus there is only one set of rectangles fully consistent with the set 
of markers $\cal M$.
\end{proof}

The orthogonality assumption in Theorem \ref{orthConvex2D} is necessary.

\begin{theorem}
\label{multConvex2D}
A set of non-intersecting convex polygons cannot always
be reconstructed from a set of markers that includes at least one marker 
per edge.
\end{theorem}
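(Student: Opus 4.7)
The statement is a negative (non-reconstructibility) result, so the natural approach is to exhibit an explicit counterexample: a marker configuration satisfying the one-marker-per-edge condition that is simultaneously consistent with two distinct sets of non-intersecting convex polygons. A single such ambiguous example suffices, and since Figure \ref{figure:ambiguousConvex} already depicts exactly this situation, my plan is to formalize what that figure conveys into a proof.

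First, I would isolate the information a single marker per edge actually supplies: it pins down the line supporting the edge (and, if normals are present, the outward side), but it says nothing about where the edge terminates, nor which other edge it meets at those endpoints. The source of ambiguity is then exposed: several edges, each carrying just one marker, can be closed up into polygons in more than one way, provided every resulting polygon is convex and the polygons remain pairwise non-intersecting. This is the same phenomenon that drives Theorem \ref{nonConvex} and Theorem \ref{multConvex}, but now it is weaponized against even the stronger one-marker-per-edge data.

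Concretely, I would construct the counterexample by starting with a convex polygon $Q$ and "cutting off" opposite corners to produce two smaller convex polygons $Q_1$ and $Q_2$ whose boundary edges lie on extensions of edges of $Q$. By placing one marker at a carefully chosen point of each edge of $Q_1 \cup Q_2$, I can arrange that the same marker set is also fully consistent with a different pair of convex polygons $Q_1', Q_2'$ (obtained by cutting the opposite pair of corners), and potentially with $Q$ itself if the interior markers can be reinterpreted as edge markers of $Q$. Figure \ref{figure:ambiguousConvex} exhibits three such simultaneous interpretations, and I would verify for each interpretation that (i) every marker lies on and is aligned with an edge of one of the polygons, (ii) each edge carries at least one marker, and (iii) the polygons in that interpretation are convex and pairwise non-intersecting.

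The main obstacle is the bookkeeping: ensuring that the very same marker set validates all the claimed interpretations without any leftover or misaligned marker. Once that verification is carried out on the configuration of Figure \ref{figure:ambiguousConvex}, the existence of two distinct fully consistent reconstructions contradicts unique reconstructibility, and the theorem follows.
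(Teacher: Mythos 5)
Your proposal is correct and takes essentially the same approach as the paper: the paper's entire proof is to exhibit the ambiguous configuration of Figure~\ref{figure:ambiguousConvex} as a counterexample, which is exactly what you propose (with additional detail on how such a configuration is built and verified).
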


\begin{proof}
A counterexample is shown in Figure~\ref{figure:ambiguousConvex}.
\end{proof}

\subsubsection{A reconstruction algorithm for Theorem \ref{orthConvex2D}.}

This section describes a constructive algorithm for finding the unique 
set of rectangles fully consistent with a marker set. The algorithm 
proceeds by finding a rectangle $P$ that is consistent with the top 
left-most marker $M_L$. The rest of this subsection describes how that 
is done.  Once it is done, all markers consistent with $P$ are ignored, 
the new top left-most marker is considered, and a new rectangle is 
found. The algorithms continues until all markers have been 
taken care of. 

We set out to find a rectangle consistent with $M_L$ that 
has markers on and aligned with each edge and no markers in 
its interior.
Consider all markers in $\cal T$ that are to the right and 
above $M_L$; one of these must be able to form a rectangle with $M_L$ 
that is part of a fully consistent set of rectangles for the 
marker set. Consider first the left-most of these markers and, 
if there are multiple left-most markers, take the top-most one of
these. Call this marker $M$. 

Partition $O$ into an irregular grid with lines defined by 
the markers in $\cal M$. Consider the smallest rectangle consistent 
with markers $M_L$ and $M$ that is made up of grid segments. 
If this rectangle contains any markers in its interior, $M$ 
cannot form a rectangle with $M_L$ that is part of a fully 
consistent set, so go on to consider the next left-most marker. 
Eventually a marker $M_{Tcand}$ is found such that the smallest grid-aligned 
rectangle $R$ containing $M_L$ and $M_{Tcand}$ does not have any markers in its 
interior. (There must be at least one, since by assumption, the markers
mark the original set of non-intersecting rectangles, the set we are 
trying to find.)  


Check if there are markers on, and aligned with all sides of $R$. 
If there are, we take this rectangle to be $P$. Otherwise, expand 
the rectangle until this property holds as follows. Let $S$ be the set of 
markers in $\cal B$ that are below and to the right of $M_L$. The set $S$ 
cannot be empty since that would mean $M_L$ is not on a rectangle
that is part of a fully consistent set.
Among the markers in $S$, consider 
the subset $S_M$ of markers that do not have any markers in $S$ directly 
above them or above them to their left. The left-most marker in $S$ (and 
top-most, if there is more than one left-most) satisfies this criterion. 
Consider first the left-most (and top-most if there is a tie) of these markers,
$M_{Bcand}$. If the smallest rectangle consistent with $M_{Bcand}$, $M_{Tcand}$,
and $M_L$ has markers in its interior, go on to the next left-most
marker in $S_M$. If there are no markers in the interior, find the
left-most marker or markers in $\cal R$ and in the strip bounded 
on the top by $M_{Tcand}$, on the bottom by $M_{Bcand}$, and to the 
right of $M_L$. If there is no such marker, then 
$M_{Tcand}$, $M_L$, and $M_{Bcand}$, cannot form a marker consistent 
rectangle, so go on to consider the next candidate for
$M_{Bcand}$. If none of the possible candidates for 
$M_{Bcand}$ work, then try the next possible $M_{Tcand}$.
If there is no marker in the interior of the rectangle defined by
$M_{Tcand}$, $M_L$, and $M_{Bcand}$, find the left-most marker in $\cal R$
that is in the strip bounded 
on the top by $M_{Tcand}$, on the bottom by $M_{Bcand}$, and on the 
left by $M_L$. If there is no such marker, then 
$M_{Tcand}$, $M_L$, and $M_{Bcand}$, cannot form a marker consistent 
rectangle, so go on to consider the next candidate for
$M_{Bcand}$. Again, if none of the candidates $M_{Bcand}$ work, 
then consider the next candidate $M_{Tcand}$.
Eventually some set works, since we know the markers form
a fully consistent set for some set of rectangles.

\begin{Definition}
{\rm An infinite polygonal {\em cylinder} consists of a 3D region obtained
by infinitely extending in both directions the edges of a polygon 
(that lies in 3D space) perpendicular to the face of the polygon.}
\end{Definition}

\begin{Definition}
{\rm A polygonal {\em half-cylinder} consists of a 3D region obtained by 
infinitely extending in the edges of a polygon in one of the 
directions perpendicular to the face of the polygon.}
\end{Definition}

\begin{proof}[Proof of Corollary \ref{simpleExtrMulti}]
For each set of markers with the same polyhedron ID, construct
the faces of the base polyhedron from the markers with metadata
indicating that they are on one of the faces of the
base polyhedron. For each base face $F$, consider all markers $S_F$ 
that indicate that they are
on faces made by elaboration of this face. 
Find all such markers that are perpendicular to this face, 
and project them onto the plane defined by this face. By
hypothesis, the result defines a set of rectangles on the plane. Apply 
Theorem \ref{orthConvex2D} to obtain the uniquely
determined set of rectangles. Each rectangle corresponds to an 
intrusion or extrusion. Consider the infinite rectangular cylinder 
defined by each of the rectangles.
If there is no marker contained in $S_F$ in the interior of this
cylinder, this rectangle defines an
intrusion that goes all the way through the polyhedron. 
If there are markers in $S_F$ inside that cylinder, they must all
lie on the same plane, since rectangles defining the extrusions and 
intrusions do not intersect by hypothesis. This plane determines the 
extent of the extrusion or intrusion. The same construction can now
be applied to all elaborations of faces that resulted from this
level of elaboration, and can continue to be applied at each level
until all elaborations have been handled.
\end{proof}

\begin{proof}[Proof of Theorem \ref{oneFaceIntrusionsPN}]
Consider all $xy$-aligned markers. All of these markers lie on
faces that are either boundaries of the original box or
are faces obtained from making interior orthogonal rectangular
intrusions in the same face. Because the intrusions are in the
interior, all of the markers obtained by intrusion have $z$ 
coordinates strictly between those of the boundary regions. So
the $xy$ boundary planes can be obtained by taking the
extremal $z$ values from among the $xy$-aligned markers. The
same process can be repeated for $xz$- and $yz$-aligned 
markers to obtain the original box. 

Now consider all markers in the interior of the box. 
The markers fall into six classes determined by the
direction of their outward facing normal, 
$\cal L$, $\cal R$, $\cal T$, $\cal B$, $\cal F$, $\cal G$, 
where the normals for the markers in $\cal L$ point to the left, 
to the right in $\cal R$, to the top in $\cal T$, to the bottom in $\cal B$, 
to the front in $\cal F$, to the back in $\cal G$.
Since all interior markers come from intrusions, and these 
intrusions were all made in the same face, exactly one of
of these classes is empty. The boundary face with that class's
normal is the face in which the intrusions were made. 

To finish the construction, we project all of the non-boundary 
markers that are perpendicular to the face onto the face. The
result defines a set of non-intersecting axis-aligned rectangles.
To this set, apply Theorem \ref{orthConvex2D} to obtain the unique
fully consistent set of rectangles. Now, just as in the proof of
Theorem \ref{simpleExtrMulti}, for each rectangle, consider the
infinite rectangular cylinder it defines. If there are no non-boundary
markers inside that cylinder, then the intrusion goes all the way
through the box. If there are non-boundary markers inside the cylinder,
they must all be in the same plane since the set of rectangles does
not intersect. These markers define the depth of the intrusion.
\end{proof}

\begin{proof}[Proof of Theorem \ref{oneFaceIntrusions}]
The proof finds the faces of the bounding box in the same way as the proof 
for Theorem \ref{oneFaceIntrusionsPN}. Because we no longer have access to
the normal direction, the method to find the face in which the
intrusion was made is more involved. 

By hypothesis all of the interior markers
must be obtained by intrusion from one
face. Consider each face in turn. Among the interior markers parallel 
to a face, find the marker closest to the face. If there are no markers
parallel to a face, all intrusions must be from this face and must
go all the way through the box. If there is a marker, check to see if
there are perpendicular markers closer to the face. If not, this is 
not the face in which the intrusions were made. If so, this is the 
face, because if the closest marker were on the side, rather than the
bottom, of an intrusion, 
there would be no markers closer to the face than it. 

In this way we have determined the face
in which all the intrusions have been made. 
The proof finishes in exactly the same way as the proof for
Theorem \ref{oneFaceIntrusionsPN}. 
\end{proof}

\end{document}